\newcommand{\nosemic}{\renewcommand{\@endalgocfline}{\relax}}
\newcommand{\dosemic}{\renewcommand{\@endalgocfline}{\algocf@endline}}
\newcommand{\norm}[1]{\left\lVert#1\right\rVert}
\newcommand{\paren}[1]{\left( #1 \right)}
\newcommand{\bracket}[1]{\left[ #1 \right]}
\newcommand{\curly}[1]{\left\{ #1 \right\}}
\newcommand{\abs}[1]{\left \lvert #1 \right \rvert}
\newcommand{\ceil}[1]{\left\lceil #1 \right\rceil}
\newtheorem{definition}{Definition}
\newtheorem{theorem}{Theorem}
\newtheorem{lemma}{Lemma}
\acrodef{pomdp}[Dec-POMDP]{decentralized partially observable Markov decision process}
\acrodef{scma}[SCMA]{sparse code multiple access}
\acrodef{posg}[POSG]{partially observable stochastic game}
\acrodef{mdp}[MDP]{Markov decision process}
\acrodef{cmdp}[CMDP]{constrained Markov decision process}
\acrodef{csma}[CSMA]{subcarrier sensing multiple access}
\acrodef{csma_ca}[CSMA/CA]{subcarrier sensing multiple access/collision avoidance}
\acrodef{dql}[DQL]{deep q-learning}
\acrodef{dqn}[DQN]{deep q-network}
\acrodef{dnn}[DNN]{deep neural network}
\acrodef{gfra}[GFRA]{grant-free random access}
\acrodef{ofdma}[OFDMA]{orthogonal frequency division multiple access}
\acrodef{noma}[NOMA]{non-orthogonal multiple access}
\acrodef{ofdm}[OFDM]{orthogonal frequency division multiplexing}
\acrodef{ppo}[PPO]{proximal policy optimization}
\acrodef{trpo}[TRPO]{trust region policy optimization}
\acrodef{rach}[RACH]{random access channel}
\acrodef{dnn}[DNN]{deep neural network}
\acrodef{dcqn}[DCQN]{deep convolutional Q network}
\acrodef{cnn}[CNN]{convolutional neural network}
\acrodef{iql}[IQL]{independent Q-learning}
\acrodef{ppo}[PPO]{proximal policy optimization}
\acrodef{rl}[RL]{reinforcement learning}
\acrodef{drl}[DRL]{deep reinforcement learning}
\acrodef{marl}[MARL]{multiagent reinforcement learning}
\acrodef{cert}[CERT]{concurrent experience replay trajectories}
\acrodef{bs}[BS]{base station}
\acrodef{tti}[TTI]{transmission time interval}
\acrodef{mac}[MAC]{medium access control}
\acrodef{phy}[PHY]{physical layer}
\acrodef{cw}[CW]{congestion window}
\acrodef{ber}[BER]{bit error rate}
\acrodef{plr}[PLR]{packet loss rate}
\acrodef{qam}[QAM]{quadrature amplitude modulation}
\acrodef{plr}[PLR]{packet loss rate}
\acrodef{awgn}[AWGN]{additive white Gaussian noise}
\acrodef{iid}[i.i.d.]{independent and identically distributed}
\acrodef{der}[DER]{damped exploration renewal}
\acrodef{tfe}[TFE]{trajectory feature extraction}
\acrodef{fifo}[FIFO]{first-in first-out}
\acrodef{sgd}[SGD]{stochastic gradient descent}
\acrodef{mtd}[MTD]{machine-type device}
\acrodef{mmtc}[mMTC]{massive machine-type communication}
\acrodef{phy}[PHY]{physical layer}
\acrodef{h2h}[H2H]{human-to-human}
\acrodef{iot}[IoT]{internet of things}
\acrodef{m2m}[M2M]{machine-to-machine}
\acrodef{qos}[QoS]{quality of service}
\acrodef{nb-iot}[NB-IoT]{narrowband internet of things}
\acrodef{amc}[AMC]{adaptive modulation and coding}
\acrodef{wap}[WAP]{wireless access point}
\acrodef{dpm}[DPM]{dynamic power management}
\acrodef{jal}[JAL]{joint action learners}
\acrodef{ann}[ANN]{artificial neural network}
\acrodef{gru}[GRU]{Gated Recurrent Unit}
\acrodef{lstm}[LSTM]{long short term memory}
\acrodef{lte}[LTE]{long-term evolution}
\acrodef{urllc}[URLLC]{ultra reliable low-latency communication}
\acrodef{il}[IL]{independent learners}
\acrodef{dacc}[DACC]{distributed actors with centralized critic}
\acrodef{3gpp}[3GPP]{3rd generation partnership project}
\acrodef{nr}[NR]{new radio}
\acrodef{embb}[eMBB]{enhanced mobile broadband}
\acrodef{cldi}[CLDI]{centralized learning with decentralized inference}
\acrodef{ack}[ACK]{acknowledgement}
\acrodef{nack}[NACK]{negative-acknowledgement}
\acrodef{adam}[ADAM]{adaptive moment estimation}
\acrodef{vr}[VR]{virtual reality}
\acrodef{ar}[AR]{augmented reality}
\acrodef{v2x}[V2X]{vehicle-to-everything}
\acrodef{ciot}[cIoT]{critical internet of things}
\acrodef{kpi}[KPI]{key performance indicator}
\acrodef{lafp}[LAFP]{latent access failure probability}
\acrodef{sps}[SPS]{semi-persistent scheduling}
\acrodef{ue}[UE]{user equipment}
\acrodef{ra}[RA]{random access}
\acrodef{mimo}[MIMO]{multiple input multiple output}
\acrodef{harq}[HARQ]{hybrid automatic repeat request}
\acrodef{mmwave}[mmWave]{millimeter wave}
\acrodef{hppp}[HPPP]{homogeneous Poisson point process}
\acrodef{ula}[ULA]{uniform linear array}
\acrodef{los}[LOS]{line-of-sight}
\acrodef{nlos}[NLOS]{non-line-of-sight}
\acrodef{prach}[PRACH]{physical random access channel}
\acrodef{rar}[RAR]{random access request}
\acrodef{pusch}[PUSCH]{physical uplink shared channel}
\acrodef{csi}[CSI]{channel state information}
\acrodef{sinr}[SINR]{signal-to-interference-plus-noise-ratio}
\acrodef{ack}[ACK]{acknowledgement}
\acrodef{nack}[NACK]{negative acknowledgement}
\acrodef{rtt}[RTT]{round-trip time}
\acrodef{qos}[QoS]{quality of service}
\acrodef{pgfl}[PGFL]{probability generating functional}
\acrodef{pmf}[PMF]{probability mass function}
\acrodef{fr2}[FR2]{frequency range 2}
\acrodef{ccdf}[CCDF]{complementary cumulative distribution function}
\begin{document}
%
\title{Reliability and User-Plane Latency Analysis of mmWave Massive MIMO for Grant-Free URLLC Applications}
%
%
%

\author{Joao V. C. Evangelista,~\IEEEmembership{Student Member,~IEEE,}
        Georges Kaddoum,~\IEEEmembership{Senior Member,~IEEE}, Zeeshan Sattar,~\IEEEmembership{Member,~IEEE}
\thanks{\ifdefined \arxiv
This work has been submitted to the IEEE for possible publication.  Copyright may be transferred without notice, after which this version may no longer be accessible.

\else
\fi
J. V. C. Evangelista  and G. Kaddoum were with the Department
of Electrical Engineering, Ecole de Technologie Supérieure, Montreal,
QC, H3C 1K3 CA, e-mail: joao-victor.de-carvalho-evangelista.1@ens.etsmtl.ca and georges.kaddoum@etsmtl.ca.
Z. Sattar was with Ericsson Canada, Ottawa, ON, K2K 2V6 CA, email:  zeeshan.sattar@ericsson.com}
}
\maketitle

\begin{abstract}
5G cellular networks are designed to support a new range of applications not supported by previous standards. Among these, ultra-reliable low-latency communication (URLLC) applications are arguably the most challenging. URLLC service requires the user equipment (UE) to be able to transmit its data under strict latency constraints with high reliability. To address these requirements, new technologies, such as mini-slots, semi-persistent scheduling and grant-free access were introduced in 5G standards. In this work, we formulate a spatiotemporal mathematical model to evaluate the user-plane latency and reliability performance of millimetre wave (mmWave) massive multiple-input multiple-output (MIMO) URLLC with reactive and $K$-repetition hybrid automatic repeat request (HARQ) protocols. We derive closed-form approximate expressions for the latent access failure probability and validate them using numerical simulations. The results show that, under certain conditions, mmWave massive MIMO can reduce the failure probability by a factor of $32$. Moreover, we identify that beyond a certain number of antennas there is no significant improvement in reliability. Finally, we conclude that mmWave massive MIMO alone is not enough to provide the performance guarantees required by the most stringent URLLC applications.
\end{abstract}

\begin{IEEEkeywords}
URLLC, massive MIMO, Spatiotemporal, latency, millimeter wave
\end{IEEEkeywords}
%
\IEEEpeerreviewmaketitle

%
%

\section{Introduction}
\label{sec:intro}

\IEEEPARstart{T}{he} \ac{3gpp} has identified three distinct use cases for 5G \ac{nr} and beyond cellular networks based on their different connectivity requirements: \ac{embb}, \ac{mmtc} and \ac{urllc} \cite{series2015imt}. Since the inception of the idea of 5G \ac{nr}, it has been argued that its main revolution is a change of paradigm from a smartphone-centric network to a network capable of satisfying the requirements of diverse services, such as machine-to-machine and vehicle-to-vehicle communications \cite{WirelessFuture2021}. The \ac{urllc} scenario targets applications that require high reliability and low latency, such as \ac{ar}, \ac{vr}, \ac{v2x}, \ac{ciot}, industrial automation and healthcare. According to the use cases defined in \cite{3gpp.38.913}, the main \ac{kpi} to be satisfied in \ac{urllc} applications is the latent access failure probability, which incorporates the reliability and latency requirements needed in such applications. The requirements for \ac{urllc} applications vary from $1-10^{-5}$ transmission reliability to transmit $32$ bytes of data with a user-plane latency of less than $1$ ms to a $1-10^{-5}$ reliability to transmit $300$ bytes with a user-plane latency of between $3$ and $10$ ms, depending on the application \cite{3gpp.38.913}. 

\Ac{lte} and prior networks were not designed with such constraints in mind. Scheduling in \ac{lte} follows a grant-based approach, where the \ac{ue} must request resources in a $4$-step \ac{ra} procedure before transmitting data \cite{Vilgelm2018}. In the best-case scenario, it takes at least $10$ ms for a \ac{ue} to start transmitting its payload. 

Therefore, new mechanisms were introduced into the 5G \ac{nr} specification to support the latency requirements of \ac{urllc} applications. Firstly, a flexible numerology was proposed, introducing the concept of a mini-slot that last as little as $0.125$ ms \cite{3gpp.38.912}, in contrast to the $1$ ms minimum slot duration on \ac{lte}, enabling fine-grained scheduling of network resources \cite{Zaidi2018}. Secondly, the introduction of \ac{sps} of grants \cite{3gpp.38.213, Karadag2019}, where some of the networkś resource blocks are periodically reserved for \ac{urllc} applications, thereby avoiding the grant request procedure. Despite the efforts, not all \ac{urllc} applications have a periodic traffic pattern and are therefore unable to benefit greatly from \ac{sps}. Additionally, some services require low latency and reliable transmission to transmit small sporadic packets. With that in mind, both the standards committee \cite{3gpp.1704222} and researchers have put a lot of effort to investigate grant-free transmission, where the \acp{ue} transmit their payload directly in the \ac{ra} channel. This culminated with the introduction of the $2$-step \ac{ra} procedure introduced in Release 16 \cite{3gpp.1704222}. The $2$-step \ac{ra} procedure follows a grant-free approach, where instead of waiting for a dedicated channel to be assigned by the network, it transmits its data directly into the \ac{ra} channel and waits for feedback from the network \cite{Kim2021}. 

Moreover, massive \ac{mimo} is a fundamental part of 5G \ac{nr} \cite{Lu2014, Larsson2014, Bjornson2018}. It provides performance gains by improving diversity against fading and, along with advanced signal processing techniques, can provide directivity to transmission/reception, mitigating interference between spatially uncorrelated \acp{ue} \cite{Marzetta2016}. The performance enhancements provided by \ac{mimo} are essential to ensure the reliability and the low latency required by \ac{urllc} applications. In conjunction with massive \ac{mimo} comes \ac{mmwave} transmission. Due to its small wavelength, \ac{mmwave} antennas can be packed into massive arrays, making it a key enabler of massive \ac{mimo} systems which attracted a significant interest on the topic \cite{Rappaport2013, Rangan2014, Andrews2016, Sattar2017, Sattar2019a, Sattar2019b}. However, \ac{mmwave} propagation comes with its own challenges due to the severe propagation loss experienced by electromagnetic signals in this frequency range.

In this paper, we develop a spatiotemporal analytical model to evaluate the performance of \ac{mmwave} massive \ac{mimo} communication systems for \ac{urllc} applications. We use tools from stochastic geometry and probability theory to evaluate and compare system performance metrics by deriving closed-form approximate expressions for its latent access failure probability under different \ac{harq} protocols.

\subsection{Related Work}
\label{sec:lit_rev}

In \cite{Gao2019}, the authors propose a queueing model to compare the throughput performance of packet-based (grant-free) and connection-based (grant-based) random access. They conclude that packet-based systems with sensing can achieve greater throughput than connection-based one for small packet transmissions. In \cite{Liu2021, Evangelista2021}, the optimization of grant-free access networks is investigated. The former considers the dynamic optimization of \ac{harq} and scheduling parameters with \ac{noma}, while the former considers the distributed link adaptation problem. Both papers formulate the respective optimization tasks as \ac{marl} problems. The probability of success and the area spectral efficiency of a grant-free \ac{scma} system is evaluated in \cite{Lai2021, Evangelista2019b}, in an \ac{mmtc} context, using stochastic geometry. However, none of the works consider the temporal aspects of the system, which are crucial to analyze the latency and reliability of \ac{urllc} service. In \cite{Ding2019}, the probability of success of grant-free \ac{ra} with massive \ac{mimo} in the sub-6 GHz band is investigated, and analytical expressions are derived for conjugate and zero-forcing beamforming. Despite its contribution, the authors do not evaluate the systemś temporal behavior, which is fundamental to characterize \ac{urllc} service's performance. Moreover, due to its distinct propagation characteristics, this model is unsuitable for \ac{mmwave} frequency bands.

The authors in \cite{Gharbieh2018} evaluate the scalability of scheduled uplink (grant-based) and random access (grant-free) transmissions in massive \ac{iot} networks, although they frame the problem through a revolutionary spatiotemporal framework, fusing stochastic geometry and queueing theory. They conclude that grant-free transmission offers lower latency, however, it does not scale well to a massive number of devices. In our work, we show that using massive \ac{mimo} \acp{bs} is a viable solution to address the scalability issues of grant-free transmission without sacrificing its latency, rendering it particularly suitable for \ac{urllc} applications. In \cite{Jiang2018}, the authors use a similar spatiotemporal model to characterize the performance of different \ac{ra} schemes with respect to the probability of a successful preamble transmission in a grant-based massive \ac{iot} system. They conclude that a backoff scheme performs close to optimally in diverse traffic conditions. In \cite{Jacobsen2017}, the authors perform system-level simulations of a grant-free \ac{urllc} network under different \ac{harq} configurations, and compare it to a baseline grant-based system. They conclude that grant-free systems provide significantly lower latency at the $1 - 10^{-5}$ reliability level. The same scenario is evaluated in \cite{Liu2020}, however, the authors characterize system performance analytically, using a stochastic geometry-based spatiotemporal model. This paper identifies the suitability of each \ac{harq} scheme for different network loads and received power levels.

Stochastic geometry has become the de facto tool for analyzing large networks \cite{Lu2021, Hmamouche2021, Jiang2018b}, and has been successfully used to investigate the performance of \ac{mimo} systems for a while now \cite{Tanbourgi2015, Nguyen2013, Adhikary2015, Lee2014}. In \cite{Afify2016}, a unified stochastic geometric mathematical model for \ac{mimo} cellular networks with retransmission is proposed. In \cite{Ding2017}, a stochastic geometry-based analytical model for the performance of downlink \ac{mmwave} \ac{noma} systems is developed. The authors propose two random beamforming methods that are able to reduce system overhead while providing performance gains for \acp{bs} with a large number of antennas.

We seek to answer the following main questions that are to the best of our knowledge missing from the current literature:
\begin{itemize}
    \item How do we formulate a tractable spatiotemporal model to investigate the reliability and latency of \ac{urllc} applications powered by \acp{bs} equipped with massive antenna arrays operating on \ac{mmwave} frequencies?
    \item What closed-form analytical expressions can we derive for the latent access failure probability in this scenario?
    \item What are the performance gains obtained from increasing the number of antennas at the \ac{bs}, and what are the limitations?
\end{itemize}

\subsection{Contributions}
\label{sec:contributions}

This paper makes three major contributions:

\begin{itemize}
    \item We formulate a mathematical model to evaluate the performance of \ac{mmwave} massive \ac{mimo} on uplink grant-free \ac{urllc} networks with \ac{harq}. This model uses stochastic geometry to capture the spatial configuration of the \acp{ue} and the \acp{bs}, a \ac{mmwave} channel model, and probability theory to obtain the temporal characteristics necessary to evaluate the performance of \ac{urllc} applications. 
    \item We derive closed-form approximate expressions for the latent access failure probability using reactive and $K$-repetition \ac{harq} schemes. To the best of our knowledge, no previous works has presented closed-form analytic expressions for this key performance measure of \ac{urllc} applications in a \ac{mmwave} massive \ac{mimo} communication system.
    \item We analyze the system performance for an extensive range of scenarios, identifying the gains and limitations provided by using the \ac{mmwave} spectrum together with a massive number of antennas at the \ac{bs}, and identify the scenarios that benefit the most from these two technologies.
\end{itemize}

\subsection{Notation and Organization}
\label{sec:notation}
Italic Roman and Greek letters denote deterministic and random variables, while bold letters denote deterministic and random vectors. The capital Greek letter $\Phi$ denotes a point process and $x \in \Phi$ represents a point belonging to said process. The notation $\Phi(A)$, where $A \in \mathbb{R}^d$, is the counting process associated with $\Phi$ \cite{Haenggi2012}. Notice that we overload the meaning of $\Phi$ so that it can signify a point process, a counting measure or a set depending on the context. ${n \choose k} = \frac{n!}{k! \paren{n - k}!}$ is the binomial coefficient of $n$ choose $k$.

The uniform, complex normal and binomial distributions are represented by $Uniform(a,b)$, $\mathcal{CN}(\mu, \sigma^2)$ and $Binomial(p)$, respectively. The vector $\mathbf{x}^H$ is the Hermitian transpose of vector $\mathbf{x}$. The function $\mathbb{P}(\cdot)$ denotes the probability of the event within parentheses. The notation $\mathbf{1} \curly{\cdot}$ denotes the indicator function, which is equal to one whenever the event within curly braces is true and zero otherwise.

This work is divided into five sections and an appendix. In Section \ref{sec:intro}, we introduce the contents of the manuscript and contextualize within the relevant literature. In Section \ref{sec:system_model}, we present a mathematical model to characterize the performance of the grant-free \ac{mmwave} massive \ac{mimo} system in \ac{urllc} applications. In Section \ref{sec:system_analysis}, we derive the latent access failure probability of the proposed system using reactive and $K$-repetition \ac{harq} protocols. In Section \ref{sec:numerical_results}, we show the results of the system simulation. We use these results to validate the analytical derivations, investigate the system's performance for an extensive range of parameters and finish it by interpreting the results in the context of \ac{urllc} applications. In Section \ref{sec:conclusions}, we summarize our findings and present our conclusions. Finally, in the appendix, we show the detailed proofs of the lemmas and theorems required by the derivations in the paper.

%
%

\section{System Model}
\label{sec:system_model}

\ifCLASSOPTIONtwocolumn
\begin{figure*}
    \centering {
    \includegraphics[width=\columnwidth]{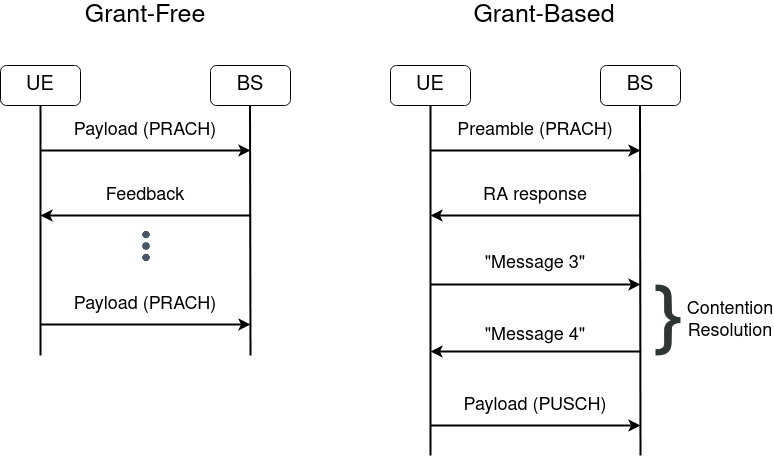}
    \caption{Comparison of the transmission procedure in grant-free and grant-based systems.}
    \label{fig:gf_vs_gb}
    }
\end{figure*}
\else
\begin{figure}
    \centering {
    \includegraphics[width=0.6\columnwidth]{figs/grant_free_vs_grant_based.png}
    \caption{Comparison of the transmission procedure in grant-free and grant-based systems.}
    \label{fig:gf_vs_gb}
    }
\end{figure}
\fi

In most cellular applications, uplink transmissions use a dedicated resource (frequency, time or a \ac{mimo} spatial layer) previously assigned by the network to transmit their data payload. Thus, when an \ac{ue} receives new data, it must send a request for the network to schedule a resource. With dedicated resources, each \ac{ue} can utilize the wireless channel to its full capacity, thus maintaining good \ac{qos}. TIn 5G \ac{nr} networks, the schedule request consists of four steps, illustrated in Figure \ref{fig:gf_vs_gb}:
\begin{itemize}
    \item The \ac{ue} randomly selects one of the available preambles and transmits it on the \ac{prach}.
    \item The \ac{bs} transmits a \ac{rar}, acknowledging receipt of the preamble and time-alignment commands.
    \item The \ac{ue} and \ac{bs} exchange contention resolution messages (messages 3 and 4) that are used to identify possible collisions arising from two different devices transmitting the same preamble. 
    \item If the grant request is successful, the \ac{ue} transmits its payload on the \ac{pusch}.
\end{itemize}
This grant-based scheme is efficient for applications that need to use the channel multiple times to transmit large amounts of data (e.g., video streaming) or data that's being continuously generated (e.g., voice). However, in some \ac{urllc} applications, \acp{ue} sporadically generate data that need to be transmitted reliably and with low latency, such as \ac{ciot} and sensors for industrial automation. In such scenarios, the time spent on the schedule request renders grant-based schemes inefficient. A more suitable alternative is to transmit the data directly on the \ac{prach} and thereby avoidall the overhead involved in requesting a grant, as illustrated in Fig. \ref{fig:gf_vs_gb}. Nonetheless, with grant-free transmission comes the possibility of collisions whenever two \acp{ue} randomly select the same preambles. Therefore, \ac{harq} is used to ensure the reliability and robustness of grant-free transmission. \Ac{harq} consists of using feedback information from the \ac{bs} so the \ac{ue} can retransmit packets that were not successfully received. Despite this, it can be quite challenging to scale grant-free networks because wireless resources are finite and expensive. To this end, massive \ac{mimo} and beamforming can be applied to reduce the interference of spatially uncorrelated \acp{ue} and thereby increasing the reliability of the system.

In this section, we discuss the spatial model of the network, the \ac{mmwave} channel model, the \ac{bs} receiver beamforming procedure and the different \ac{harq} schemes used.

\subsection{Physical Layer Model}
\label{sec:spatial_model}
Stochastic geometry and the theory of random point processes has proven to be
able to accurately model the spatial distribution of modern cellular network deployments \cite{Lu2015}. Therefore, we consider a cell of radius $R$ consisting of a \ac{bs}, equipped with $K$ antennas, located at the origin. We model the spatial location of the single-antenna \acp{ue} according to a \ac{hppp} \cite{Haenggi2012}, denoted by $\Phi_U$ with intensity $\lambda_U$. Furthermore, the distance between the $i$-th \ac{ue}, $x_i \in \Phi_U$, and the \ac{bs} is given by $\norm{x_i}$. Both the distance from the \ac{ue} to the \ac{bs} and its normalized angle from the \ac{bs} are uniformly distributed random variables \cite{Haenggi2012}, $\norm{x_i} \sim Uniform(0, R)$ and $\theta_i \sim Uniform(-1, 1)$, respectively. 

Due to path loss attenuation, the signal received from \acp{ue} located further from the \ac{bs} is ``drowned'' by the signal from closer users transmitting with the same power, also known as the near-far problem. Uplink power control is fundamental to deal with this issue. We consider that the \acp{ue} utilize path loss inversion power control \cite{Elsawy2014}, with received power threshold $\rho$, where each user controls its transmit power such that the average received power at its associated \ac{bs} is $\rho$, by selecting their transmit powers as $p_i = \rho \norm{x_i}^{\alpha}$, where $\alpha$ is the path loss exponent. We assume that there are $N_S$ subcarriers reserved for grant-free \ac{urllc} transmissions and $N_P$ orthogonal preambles. Thus, at each \ac{tti}, the active \acp{ue} select a subcarrier and preamble randomly from the $N_S$ available subcarriers and $N_P$ available preambles. Moreover, we assume that at $t=0$, one packet arrives to the transmitting queue of each \ac{ue}. Therefore, the \ac{hppp} of active users $\Phi_A$ on a specific subcarrier is obtained by thinning $\Phi_U$ \cite{Haenggi2012} and its effective intensity at $t=0$ is given by
\begin{equation}
    \label{eq:active_lambda}
    \lambda_A = \frac{\lambda_U}{N_S}.
\end{equation}

Massive \ac{mimo} technology and \ac{mmwave} frequencies are intrinsically connected. Even though one does not imply the other, they complement each other really well. The former requires large antenna arrays, and the size of such arrays is proportional to the targeted wavelength. Moreover, \ac{mmwave} antennas must be really small to operate in such large frequencies, therefore, a larger number of them are necessary to gather enough energy. In this work, we consider that the \ac{bs} is equipped with a massive \ac{ula} containing $K \gg 1$ antennas operating at \ac{mmwave} frequencies, while the \acp{ue} possess a single antenna. The channel vector between user $i$ and the \Ac{bs} is given by
\begin{equation}
    \label{eq:channel_gain}
    \mathbf{h}_i = \sqrt{K} \left[ \underbrace{\frac{g_{i, 0} \mathbf{a}(\theta_i^0)}{\sqrt{\norm{x_i}^{\alpha_{LOS}}}}}_{\text{LOS component}} + \underbrace{\frac{\underset{j=1}{\overset{J}{\sum}} g_{i,j} \mathbf{a}(\theta_i^j)}{\sqrt{\norm{x_i}^{\alpha_{NLOS}}}}}_{\text{NLOS components}} \right],
\end{equation}
where $g_{i,j} \sim \mathcal{CN}(0, 1)$ is the complex gain on the $j$-th path and $\theta^j_i$ is the normalized direction of the $j$-th path. We assume that the complex gains of different paths are independent. $\alpha_{LOS}$ and $\alpha_{NLOS}$ denote the path loss exponent of the \ac{los} and \ac{nlos} paths, respectively. The vector
\begin{equation}
    \label{eq:phase_array}
    \mathbf{a}(\theta) = \frac{1}{\sqrt{K}}
    \begin{bmatrix}
        1 & e^{-j \pi \theta} & \dots & e^{-j \pi (K-1) \theta}
    \end{bmatrix}^T
\end{equation}
denotes the phase of the signal received by each antenna. Due to high penetration losses suffered by \ac{mmwave} signals, the \ac{los} path has a dominant effect on channel gain, being $20$ dB larger than the \ac{nlos} in some cases \cite{Ding2017, Lee2014}. Hence, we can safely approximate $\mathbf{h}_i $ as
\begin{equation}
    \label{eq:approx_channel}
    \mathbf{h}_i \approx \sqrt{K} \frac{g_{i} \mathbf{a}(\theta_i)}{\sqrt{\norm{x_i}^{\alpha}}}
\end{equation}
for mathematical tractability. Additionally, to avoid cluttering the notation, we drop the subscripts denoting different paths and distinguishing \ac{los} and \ac{nlos} variables. 

Due to the dominant effect of the \ac{los} link, the channel model also needs to consider a blockage model to determine the probability that the \ac{los} path between the \ac{ue} the \ac{bs} is obstructed. To model the effects of blockage, we adopt the model proposed in \cite{Thornburg2016}. This model is obtained by assuming that the obstructing building and structures form an \ac{hppp} with random width, length and orientation. So, let $LOS$ be the set of \Ac{los} \acp{ue}; then, the probability that user $x_i$ has a \ac{los} link is given by
\begin{equation}
    \label{eq:los_prob}
    \mathbb{P} \paren{x_i \in LOS} = \exp \paren{- \beta \norm{x_i}},
\end{equation}
where $\beta$ is directly proportional to the density, and the average width and length of obstructing structures. This model nicely captures the exponentially vanishing probability of having a \ac{los} link the further you move away from the \ac{bs}, and can be easily fitted to real urban scenarios.

\subsubsection*{Signal Model}
\label{sec:signal_model}
At each \ac{tti}, the active users transmit an information signal $s_i$ such that $\abs{s_i} = 1$. Therefore, the vector of the signal received at the \ac{bs} is given by
\begin{equation}
    \mathbf{y} = \underset{x_i \in \Phi_A}{\sum} \mathbf{1}\curly{x_i \in LOS} \sqrt{\rho} \mathbf{h}_i s_i + \mathbf{n},
\end{equation}
where $\mathbf{n} \sim \mathcal{CN} \paren{0, \sigma_2 \mathbf{I}}$ is a circularly symmetric complex Gaussian random variable representing \ac{awgn}.

To successfully recover the data transmitted by a given user, the \ac{bs} must be able to accurately estimate its channel response. 
\begin{definition}[Preamble Collision]
    Preamble collision event, denoted by $C$, happens when two or more devices transmit the same preamble on the same subcarrier.
\end{definition}
We assume that the \ac{bs} is able to perfectly estimate the \ac{ue} channel response $\mathbf{h}_i$ whenever there is no preamble collision. Then, the \ac{bs} performs conjugate beamforming to separate the intended user's signal from those of the other interfering \acp{ue} by multiplying the received signal by the Hermitian transpose of the intended user channel response. Therefore the recovered signal of the intended user, the $j$-th user, is
\ifCLASSOPTIONtwocolumn
\begin{eqnarray}
    y_j &=&  \mathbf{1}\curly{x_j \in LOS} \mathbf{1}\curly{\bar{C}} \sqrt{\rho} \mathbf{h}_j^H \mathbf{h}_j s_j + \nonumber \\ && \underset{x_i \in \Phi_A \setminus \curly{x_j}}{\sum} \mathbf{1}\curly{x_i \in LOS} \sqrt{\rho} \mathbf{h}_j^H \mathbf{h}_i s_i + \tilde{n}_j, \nonumber \\
\end{eqnarray}
\else
\begin{eqnarray}
    y_j &=&  \mathbf{1}\curly{x_j \in LOS} \mathbf{1}\curly{\bar{C}} \sqrt{\rho} \mathbf{h}_j^H \mathbf{h}_j s_j + \nonumber \\ && \underset{x_i \in \Phi_A \setminus \curly{x_j}}{\sum} \mathbf{1}\curly{x_i \in LOS} \sqrt{\rho} \mathbf{h}_j^H \mathbf{h}_i s_i + \tilde{n}_j,
\end{eqnarray}
\fi
where $\bar{C}$ is the event when user $j$ does not experience preamble collision and $\tilde{n}_j \sim \mathcal{CN} \paren{0, \sigma^2}$ is a linear combination of the noise vector, which is a Gaussian distributed random variable. Therefore, the \ac{sinr} experienced by user $j$ is given by
\ifCLASSOPTIONtwocolumn
\begin{eqnarray}
    \label{eq:sinr}
    &&\mathrm{SINR} \nonumber \\ &=& \frac{\mathbf{1}\curly{x_j \in LOS \cap \bar{C}} \rho \abs{\mathbf{h}_j^H \mathbf{h}_j}^2}{\underset{x_i \in \Phi_A \setminus \curly{x_j}}{\sum} \mathbf{1}\curly{x_i \in LOS} \rho \abs{\mathbf{h}_j^H \mathbf{h}_i}^2 + \sigma^2} \nonumber \\ 
    &=& \frac{\mathbf{1}\curly{TX_j} \rho \abs{g_j}^2 \abs{\mathbf{a}\paren{\theta_j}^H \mathbf{a}\paren{\theta_i}}^2}{I + \sigma^2}, 
\end{eqnarray}
\else
\begin{eqnarray}
    \label{eq:sinr}
    \mathrm{SINR} &=& \frac{\mathbf{1}\curly{x_j \in LOS \cap \bar{C}} \rho \abs{\mathbf{h}_j^H \mathbf{h}_j}^2}{\underset{x_i \in \Phi_A \setminus \curly{x_j}}{\sum} \mathbf{1}\curly{x_i \in LOS} \rho \abs{\mathbf{h}_j^H \mathbf{h}_i}^2 + \sigma^2} \nonumber \\ 
    &=& \frac{\mathbf{1}\curly{TX_j} \rho \abs{g_j}^2 \abs{\mathbf{a}\paren{\theta_j}^H \mathbf{a}\paren{\theta_i}}^2}{I + \sigma^2}, 
\end{eqnarray}
\fi
where $TX_j = \{x_j \in LOS \cap \bar{C}\}$ is the probability that user $j$ has a \ac{los} link and does not suffer from preamble collision, and $I = \underset{x_i \in \Phi_A \setminus \curly{x_j}}{\sum} \mathbf{1}\curly{x_i \in LOS} \rho \abs{g_i}^2 \abs{\mathbf{a}\paren{\theta_j}^H \mathbf{a}\paren{\theta_i}}^2$ is the interference from the other \acp{ue}. Moreover, the beamforming gain, $\abs{\mathbf{a}\paren{\theta_j}^H \mathbf{a}\paren{\theta_i}}^2$, can be expressed as \cite{Lee2015}
\ifCLASSOPTIONtwocolumn
\begin{equation}
\begin{split}
    \label{eq:fejer_kernel}
    \abs{\mathbf{a}\paren{\theta_j}^H \mathbf{a}\paren{\theta_i}}^2 &= F_K\paren{\frac{\pi}{2} \paren{\theta_i - \theta_j}} \\ 
    &= \frac{1}{K} \abs{\frac{\sin \paren{\frac{K \pi}{2} \paren{\theta_i - \theta_j}}}{\sin \paren{\frac{\pi}{2} \paren{\theta_i - \theta_j}}}}^2,
\end{split}
\end{equation}
\else
\begin{equation}
    \label{eq:fejer_kernel}
    \abs{\mathbf{a}\paren{\theta_j}^H \mathbf{a}\paren{\theta_i}}^2 = F_K\paren{\frac{\pi}{2} \paren{\theta_i - \theta_j}} = \frac{1}{K} \abs{\frac{\sin \paren{\frac{K \pi}{2} \paren{\theta_i - \theta_j}}}{\sin \paren{\frac{\pi}{2} \paren{\theta_i - \theta_j}}}}^2,
\end{equation}
\fi
where $F_K(x)$ is the Fejer kernel \cite{Marsden1993}, with $F_K(0) = K$.
\ifCLASSOPTIONtwocolumn
\begin{figure}
    \centering {
    \includegraphics[width=\columnwidth]{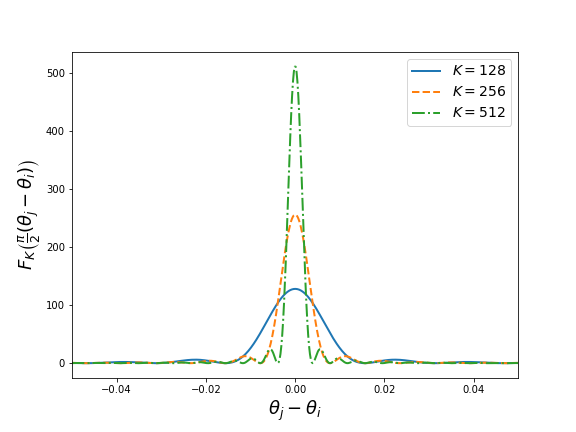}
    \caption{Fejer kernel value for normalized angles of arrival varying from $-0.2$ to $0.2$.}
    \label{fig:fejer_kernel}
    }
\end{figure}
\else
\begin{figure}
    \centering {
    \includegraphics[width=.6\columnwidth]{figs/fejer_kernel.png}
    \caption{Fejer kernel value for normalized angles of arrival varying from $-0.05$ to $0.05$.}
    \label{fig:fejer_kernel}
    }
\end{figure}
\fi
A useful property of the Fejer kernel is that \cite{Marsden1993}
\begin{equation}
    \label{eq:fejer_property}
    \underset{K \rightarrow \infty}{\lim} \int_{\delta \leq \abs{x} \leq \pi} F_K(x) dx = 0,
\end{equation}
meaning that for an asymptotically large value of $K$, the interference for the signals not aligned with beam angle $\theta_j$ goes to zero. Fig. \ref{fig:fejer_kernel} illustrates this property by plotting the Fejer kernel for increasing values of $K$.

\subsection{HARQ Schemes}

\ac{harq} protocols determine how transmitters and receivers exchange information about successful packet reception, by transmitting an \ac{ack} signal, and how \acp{ue} retransmit in the event of failure, which is signaled by the transmission of a \ac{nack} signal. They are especially important to ensure reliability in grant-free transmission. The \ac{harq} protocol used also impacts the overall latency of the system. Hence, in this paper, we investigate the performance of the massive \ac{mimo} \ac{urllc} network under two distinct \ac{harq} protocols.

With respect to transmissions latency, the \ac{harq} protocols investigated have a few aspects in common. First, the \ac{ue} spends $T_A$ \acp{tti} to process a newly arrived packet. As soon as the packet is processed, it spends $T_{TX}$ \acp{tti} transmitting it. Upon receipt of the packet, the \ac{bs} spends $T_{DP}$ \acp{tti} to process it and $T_F$ \acp{tti} to send feedback and for it to reach the \ac{ue}. Once the \ac{ue} receives the feedback signal, it takes $T_{UP}$ \acp{tti} to process it. We consider that the transmit and feedback time already take into account the propagation delay between the transmitter and receiver. Without loss of generality, we assume that $T_A = T_{TX} = T_{DP} = T_{UP} = 1$ \ac{tti}. Another concept shared between different \ac{harq} protocols is the \ac{rtt}, which consists of the time it takes from the start of a transmission by the \ac{ue} to the end of processing of the feedback signal, either \ac{ack} or \ac{nack}, the \ac{ue} received from the \ac{bs}.

\ifCLASSOPTIONtwocolumn
\begin{figure}
    \centering {
    \includegraphics[width=.8\columnwidth]{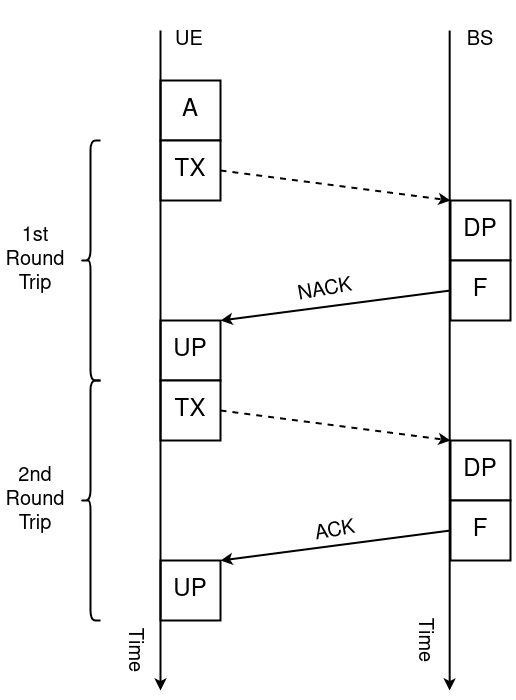}
    \caption{An illustration of a couple of reactive HARQ protocol round trips.}
    \label{fig:reactive}
    }
\end{figure}

\begin{figure}
    \centering {
    \includegraphics[width=.8\columnwidth]{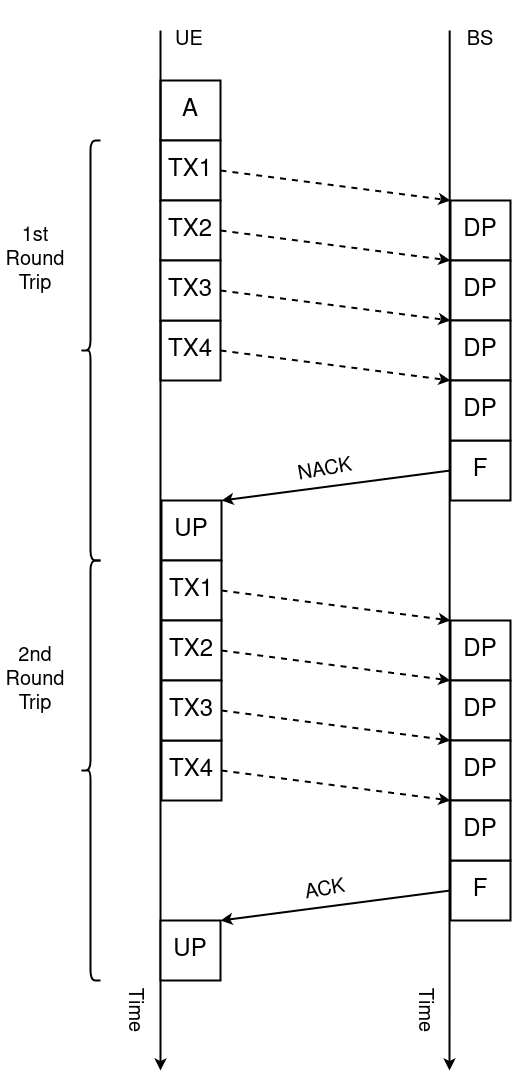}
    \caption{An illustration of a couple of $K$-repetition HARQ protocol round trips.}
    \label{fig:krep}
    }
\end{figure}

\else
\begin{figure}
    \begin{minipage}{.4\textwidth}
        \centering {
        \includegraphics[width=\columnwidth]{figs/reactive_harq.png}
        \caption{An illustration of a couple of reactive HARQ protocol round trips.}
        \label{fig:reactive}
        }
    \end{minipage}
    \begin{minipage}{.4\textwidth}
        \centering {
        \includegraphics[width=\columnwidth]{figs/krepetition_harq.png}
        \caption{An illustration of a couple of $K$-repetition HARQ protocol round trips.}
        \label{fig:krep}
        }
    \end{minipage}
%
\end{figure}
\fi

\subsubsection{Reactive Scheme}
\label{sec:reactive_harq}
The reactive \ac{harq} protocol is the more straight-forward one of the two considered in this paper. The \ac{ue} attempts to transmit one packet and waits for feedback from the \ac{bs}. Once the feedback is processed, it either attempts to retransmit the same packet if it got a \ac{nack} signal or sits idle until a new packet arrives. This protocol is illustrated in Fig. \ref{fig:reactive}, which shows the processing times and signal exchange between the \ac{ue} and the \ac{bs}. Under the assumptions considered in this paper, the reactive \ac{rtt} $T^{reac}_{RTT}$ is given by
\begin{equation}
    \label{eq:reactive_rtt}
    T^{reac}_{RTT} = 4 \text{ TTIs}.
\end{equation}
From (\ref{eq:reactive_rtt}), the user-plane latency of the $m$-th \ac{harq} round-trip is
\begin{equation}
    \label{eq:reactive_latency}
    T^{reac}(m) = T_A + 4 m \text{ TTIs}.
\end{equation}

\subsubsection{$K$-Repetition Scheme}
\label{sec:krep_harq}
To increase the reliability and robustness of each transmission attempt, the $K$-repetition \ac{harq} protocol repeats the same packet $K_{rep}$ times on each attempt. Therefore, the only way a transmission attempt fails is if each of the $K_{rep}$ transmissions fail, which translates into an increased reliability of the overall system. However, feedback on the transmission attempt is sent only after the last repetition is processed by the \ac{bs}. So, there is a tradeoff between enhancing the reliability of each transmission and increasing the latency of a transmition attempt. Fig. \ref{fig:krep} shows two $K$-repetition round-trip transmissions, where the first transmission fails and the second is successful. The \ac{rtt} of the $K$-repetition \ac{harq} protocol is
\begin{equation}
    \label{eq:krep_rtt}
    T^{K_{rep}}_{RTT} = K_{rep} + 3 \text{ TTIs}.
\end{equation}
Therefore, the total latency of $m$ $K$-repetition transmissions is given by
\begin{equation}
    \label{eq:krep_latency}
    T^{K_{rep}}(m) = T_A + m T^{K_{rep}}_{RTT} = 1 + m (K_{rep} + 3) \text{ TTIs}.
\end{equation}

\section{System Analysis}
\label{sec:system_analysis}
The main requirement of \ac{urllc} applications is to reliably keep the user-plane latency below an application-dependent latency constraint. We begin this section by unambiguously defining what we mean by reliably and user-plane latency.
\begin{definition}[User-Plane Latency]
    User-plane latency is the time spent between the arrival of a packet to the \ac{ue}'s queue and the successful processing of an \ac{ack} signal received from the \ac{bs}.
\end{definition}
\begin{definition}[Latent Access Failure Probability Requirement]
    Latent access failure probability $\mathcal{P}_F(T \leq \tau)$, where $T$ is the user-plane latency and $\tau$ is the latency constraint, is the probability that the \ac{ue} data cannot be successfully decoded.
\end{definition}
Therefore, the \ac{qos} requirement of \ac{urllc} applications can be stated as
\begin{equation}
    \label{eq:urllc_qos}
    \mathcal{P}_F(T \leq \tau) \leq \epsilon,
\end{equation}
where $\tau$ is the latency constraint and $\epsilon$ is the minimum reliability, and both are application-dependent. Thus, to satisfy the \ac{qos} requirement, the probability that an \ac{ue} cannot transmit its data before $\tau$ must be bounded by $\epsilon$. Typically, $\tau$ varies between $1$ and $10$ ms and $\epsilon$ varies between $10^{-5}$ and $10^{-6}$ depending on the \ac{urllc} application.

Let $M$ be the maximum number of retransmissions under the latency constraint $\tau$. Moreover, notice that some of the \acp{ue} will transmit successfully earlier than others, and if the \ac{ue}'s transmission queue stays idle, the interference levels in distinct retransmissions are different. Therefore, the latent access failure probability is a function of the fraction of active users at the $m$-th retransmission ($\mathcal{A}_m$), the probability that the $m$-th retransmission is successful ($\mathcal{P}_m$( and the maximum number of retransmissions ($M$), as given by \cite{Liu2020}
\begin{equation}
    \label{eq:lafp}
    \mathcal{P}_F(T \leq \tau) = 
    \begin{cases}
        1, &\text{, if } M = 0 \\
        1 - \underset{m = 1}{\overset{M}{\sum}} \mathcal{A}_m \mathcal{P}_m &\text{, if } M \geq 1,
    \end{cases}
\end{equation}
where $\mathcal{A}_m$ is
\begin{equation}
    \label{eq:active_ratio}
    \mathcal{A}_m = 
    \begin{cases}
        1, &\text{, if } m = 1 \\
        1 - \underset{i = 1}{\overset{m-1}{\sum}} \mathcal{A}_i \mathcal{P}_i &\text{, if } m \geq 2.
    \end{cases}
\end{equation}
Given the expressions for $\mathcal{P}_m$, the latent access failure probability is obtained by iteratively computing (\ref{eq:active_ratio}) and (\ref{eq:lafp}).

In the rest of this section, we derive closed-form expressions for $\mathcal{P}_m$ under the reactive and $K$-repetition \ac{harq} protocols, denoted by $\mathcal{P}^{reac}_m$ and $\mathcal{P}^{Krep}_m$, respectively. To do so, we use stochastic geometric analysis to obtain the probability of success of a randomly chosen user $x_0$, herein the typical user. From Slivnyak's theorem \cite{Baccelli2009}, the performance of the typical user in an \ac{hppp} is representative of the average user's performance.

\subsection{Reactive \ac{harq}}
\label{sec:analysis_reactive}
The maximum number of \ac{harq} transmissions following the reactive \ac{harq} protocol with the delay constraint $\tau$ is given by
\begin{equation}
    \label{eq:reactive_max_tx}
    M^{reac} = \ceil{\frac{\tau - 1}{T^{reac}_{RTT}}} = \ceil{\frac{\tau - 1}{4}}.
\end{equation}
The first step in deriving an expression for the latent access failure probability is to obtain the probability that the $m$-th reactive retransmission is successful ($\mathcal{P}^{reac}_m$). 
Let $\Phi_I = \{x_i \lvert x_i \in \Phi_A \setminus \curly{x_0} \cap x_i \in LOS_m \}$ be the set of users interfering with the typical user's transmission on the $m$-th \ac{rtt}. Notice that due to the exponentially decreasing probability of a \ac{los} link with the increase in distance, $\Phi_I$ is a non-\ac{hppp} with density $\lambda_I(x) = \lambda_A \exp \paren{- \beta \norm{x}}$. The mean measure of $\Phi_I$, the average number of points in a given area, is obtained as
\ifCLASSOPTIONtwocolumn
\begin{eqnarray}
    \label{eq:mean_measure}
    \Lambda \paren{b(0, r)} &=& E \bracket{\Phi_I \paren{b(0, r)}} = \int_{\mathbb{R}^2} \lambda_I(x) dx \nonumber \\ &=& 2 \pi \lambda_A \int_0^r \exp \paren{- \beta r} r dr d\theta \nonumber \\ &=& \frac{2 \pi \lambda_A}{\beta^2} \bracket{1 - \exp \paren{- \beta r} \paren{1 + \beta r}}, \nonumber \\
\end{eqnarray}
\else
\begin{eqnarray}
    \label{eq:mean_measure}
    \Lambda \paren{b(0, r)} &=& E \bracket{\Phi_I \paren{b(0, r)}} = \int_{\mathbb{R}^2} \lambda_I(x) dx \nonumber \\ &=& 2 \pi \lambda_A \int_0^r \exp \paren{- \beta r} r dr d\theta = \frac{2 \pi \lambda_A}{\beta^2} \bracket{1 - \exp \paren{- \beta r} \paren{1 + \beta r}},
\end{eqnarray}
\fi
where $b(0,r)$ is a $2$-dimensional ball with radius $r$ that is centered at the origin. Now, let $N_m$ be a random variable denoting the number of users that interfere with the typical user on the $m$-th retransmission. From (\ref{eq:mean_measure}), the probability that there are $n$ interferers in the cell with radius R is derived as
\ifCLASSOPTIONtwocolumn
\begin{eqnarray}
    \label{eq:prob_interf}
    \mathbb{P}\paren{N_m = n} &=& \frac{\bracket{\mathcal{A}^{reac}_m \Lambda \paren{b(0, R)}}^n}{n!} \nonumber \\ &&\times \exp \paren{- \mathcal{A}^{reac}_m \Lambda \paren{b(0, R)}}. \nonumber \\
\end{eqnarray}
\else
\begin{equation}
    \label{eq:prob_interf}
    \mathbb{P}\paren{N_m = n} =\frac{\bracket{\mathcal{A}^{reac}_m \Lambda \paren{b(0, R)}}^n}{n!} \exp \paren{- \mathcal{A}^{reac}_m \Lambda \paren{b(0, R)}}.
\end{equation}
\fi
\begin{lemma}
    \label{lemma:ps_reac}
    If $K \gg 1$, the probability that the $m$-th reactive \ac{harq} retransmission of the typical user conditioned on the events that the typical user does not experience preamble collision, has a \ac{los} link and is affected by $n$ interferers can be approximated as
    \ifCLASSOPTIONtwocolumn
    \begin{eqnarray}
        \label{eq:ps_reac}
        &&\mathbb{P}\paren{\mathrm{SINR}_m \geq \gamma \left \lvert TX_0, N_m = n \right.} \nonumber \\ &\approx& \underset{n^\prime = 0}{\overset{n}{\sum}} {n \choose n^\prime} \paren{\frac{2}{K}}^{n^\prime} \paren{1 - \frac{2}{K}}^{n - n^\prime} \nonumber \\ 
        && \times \exp \paren{-\frac{\gamma}{\rho K}} \bracket{\frac{\tanh^{-1}\paren{\sqrt{\frac{\gamma}{1 + \gamma}}}}{\sqrt{\gamma \paren{1 + \gamma}}}}^{n^\prime}, \nonumber \\
    \end{eqnarray}
    \else
    \begin{eqnarray}
        \label{eq:ps_reac}
        &&\mathbb{P}\paren{\mathrm{SINR}_m \geq \gamma \left \lvert \bar{C}, x_0 \in LOS_m, N_m = n \right.} \nonumber \\ &\approx& \underset{n^\prime = 0}{\overset{n}{\sum}} {n \choose n^\prime} \paren{\frac{2}{K}}^{n^\prime} \paren{1 - \frac{2}{K}}^{n - n^\prime} \exp \paren{-\frac{\gamma}{\rho K}} \bracket{\frac{\tanh^{-1}\paren{\sqrt{\frac{\gamma}{1 + \gamma}}}}{\sqrt{\gamma \paren{1 + \gamma}}}}^{n^\prime},
    \end{eqnarray}
    \fi
    where $n^\prime$ is the number of interferers within the primary lobe of the beam directed at the typical user.
\end{lemma}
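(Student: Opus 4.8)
The plan is to fix the conditioning event, reduce $\mathrm{SINR}_m$ to a single exponential divided by a sum of independent terms, and then average in three nested stages: over the fading, over the interferer directions that fall inside the beam, and over how many interferers fall inside the beam.

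\emph{Conditional \ac{sinr}.} On the event $TX_0=\curly{x_0\in LOS_m}\cap\bar C$ the numerator indicator in \eqref{eq:sinr} equals one, so using \eqref{eq:fejer_kernel} together with $F_K(0)=K$ I write, conditionally on $TX_0$ and on $N_m=n$,
\[
  \mathrm{SINR}_m=\frac{\rho K\abs{g_0}^2}{\sigma^2+\rho\sum_{i=1}^{n}\abs{g_i}^2\,F_K\!\paren{\tfrac{\pi}{2}\paren{\theta_i-\theta_0}}},
\]
where $\abs{g_0}^2,\abs{g_1}^2,\dots$ are \ac{iid} $\mathrm{Exp}(1)$ (since $g_i\sim\mathcal{CN}(0,1)$) and, given $N_m=n$, the directions $\theta_1,\dots,\theta_n$ are \ac{iid} $Uniform(-1,1)$, independent of both the distances entering $\Phi_I$ and the $\abs{g_i}^2$.

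\emph{Beam partition and a parabolic main lobe.} By the localisation property \eqref{eq:fejer_property}, for $K\gg1$ only the main lobe of the Fejer kernel contributes; the first null of the numerator in \eqref{eq:fejer_kernel} sits at $\abs{\theta_i-\theta_0}=2/K$, so I drop the sidelobe terms and keep only the interferers with $\abs{\theta_i-\theta_0}\le 2/K$. Conditioning on $\theta_0$, and since the $\theta_i$ are \ac{iid} uniform, each interferer falls in this window independently with probability $2/K$ (up to an $O(K^{-2})$ boundary correction when $\theta_0$ lies within $2/K$ of $\pm1$), so the number $N'_m$ of in-lobe interferers is binomial with $n$ trials and success probability $2/K$. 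Inside the window I replace the main lobe by the parabola matching its peak and nulls, $F_K\!\paren{\tfrac{\pi}{2}(\theta_i-\theta_0)}\approx K\paren{1-t_i^2}$ with $t_i=\tfrac{K}{2}(\theta_i-\theta_0)$; conditionally on being in-lobe, the $t_i$ are \ac{iid} with $\abs{t_i}\sim Uniform(0,1)$.

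\emph{Averaging.} Condition additionally on $N'_m=n'$. Because only the $n'$ in-lobe interferers survive and $\abs{g_0}^2\sim\mathrm{Exp}(1)$,
\[
  \mathbb{P}\!\paren{\mathrm{SINR}_m\ge\gamma\mid TX_0,N_m=n,N'_m=n'}=\mathbb{E}\!\bracket{\exp\!\paren{-\tfrac{\gamma\sigma^2}{\rho K}-\tfrac{\gamma}{K}\textstyle\sum_{i=1}^{n'}\abs{g_i}^2 F_K(\cdot)}}=e^{-\gamma\sigma^2/(\rho K)}\prod_{i=1}^{n'}\mathbb{E}_{t_i,g_i}\!\bracket{e^{-\gamma|g_i|^2(1-t_i^2)}}.
\]
Averaging each factor first over $\abs{g_i}^2\sim\mathrm{Exp}(1)$ and then over $\abs{t_i}\sim Uniform(0,1)$ gives $\int_0^1\frac{dt}{1+\gamma(1-t^2)}$, an elementary integral of the type $\int dx/(a^2-x^2)$ that evaluates to $L(\gamma):=\tfrac{1}{\sqrt{\gamma(1+\gamma)}}\tanh^{-1}\!\sqrt{\tfrac{\gamma}{1+\gamma}}$. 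Hence the conditional success probability is $e^{-\gamma\sigma^2/(\rho K)}L(\gamma)^{n'}$, and de-conditioning over $N'_m$ (binomial with parameters $n$ and $2/K$), with the noise variance normalised to unity, reproduces exactly \eqref{eq:ps_reac}.

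\emph{Where the difficulty lies.} The algebra itself — a product of Laplace transforms of unit-mean exponentials followed by one elementary angular integral — is routine; the substance is the chain of $K\gg1$ approximations: discarding the Fejer sidelobes, which is licensed by \eqref{eq:fejer_property}; the parabolic replacement of the main lobe, which is exact at the peak and at the nulls but not in curvature and is the crude step responsible for the tidy $\tanh^{-1}$ form; and the binomial model for the in-lobe count, which suppresses the $O(K^{-2})$ edge effect. The independence of the in-lobe indicators and of the $t_i$ across interferers that the factorisation requires is not a genuine obstacle: it holds exactly once we condition on $\theta_0$, because the $\theta_i$ are \ac{iid}.
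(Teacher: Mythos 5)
Your proof is correct and follows essentially the same route as the paper's: rewrite the conditional success probability as the exponential tail of $\abs{g_0}^2$, i.e. a noise factor times the Laplace transform of the interference, discard the Fejer sidelobes and replace the main lobe by the parabola with peak $K$ and nulls at $\pm 2/K$, evaluate the per-interferer exponential-then-uniform-angle average to get the $\tanh^{-1}/\sqrt{\gamma(1+\gamma)}$ factor, and average over the $Binomial(n,2/K)$ in-lobe count. Your explicit conditioning on the in-lobe count and the remark that the lemma's $\exp(-\gamma/(\rho K))$ corresponds to $\sigma^2$ normalised to one are only presentational refinements of the paper's argument.
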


\begin{proof}
    See Appendix \ref{app:ps_reac}.
\end{proof}

After deriving the expressions for the probability of having $N_m$ users interfere with retransmission $m$ in (\ref{eq:prob_interf}) and the conditional probability of success obtained in Lemma \ref{lemma:ps_reac}, the success probability can be obtained as follows:

\begin{theorem}
\label{theorem:reac}
The probability that the $m$-th reactive \Ac{harq} retransmission is successfully decoded is
\ifCLASSOPTIONtwocolumn
\begin{eqnarray}
    \label{eq:reac_succ}
    &&\mathcal{P}_m^{reac} \nonumber \\ = &\underset{n = 0}{\overset{\infty}{\sum}}& \mathbb{P}(N_m = n) \mathbb{P}\paren{\left. \bar{C} \right.\lvert N_m = n} \nonumber\\ &&\times \mathbb{P}(x_0 \in LOS_m) \nonumber \\
    &\times\mathbb{P}&(\mathrm{SINR}_m \geq \gamma \left \lvert TX_0, N_m = n \right.),\nonumber \\
\end{eqnarray}
\else
\begin{eqnarray}
    \label{eq:reac_succ}
    \mathcal{P}_m^{reac} = &\underset{n = 0}{\overset{\infty}{\sum}}& \mathbb{P}(N_m = n) \mathbb{P}\paren{\left. \bar{C} \right.\lvert N_m = n} \nonumber\\ &&\mathbb{P}(x_0 \in LOS_m) \mathbb{P}(\mathrm{SINR}_m \geq \gamma \left \lvert \bar{C}, x_0 \in LOS_m, N_m = n \right.),
\end{eqnarray}
\fi
where $\mathbb{P}(N_m = n)$ is the probability that there are $n$ interferers in the cell and is given by (\ref{eq:prob_interf}). The probability of no preamble collision is given by
\begin{equation}
    \label{eq:non_col}
    \mathbb{P}\paren{\left. \bar{C} \right\lvert N_m = n} = \paren{1 - \frac{1}{Ns}}^n.
\end{equation}
And finally, the probability that the typical user has a \ac{los} link to the \ac{bs} is
\ifCLASSOPTIONtwocolumn
\begin{eqnarray}
    \label{eq:los_prob_m}
    \mathbb{P}\paren{x_i \in LOS_m} &=& \frac{1}{R} \int_0^R \exp \paren{ - \beta \norm{x_i}} d\norm{x_i} \nonumber \\ &=& \frac{1}{\beta R} \bracket{1 - \exp \paren{-\beta R}}.
\end{eqnarray}
\else
\begin{equation}
    \label{eq:los_prob_m}
    \mathbb{P}\paren{x_i \in LOS_m} = \frac{1}{R} \int_0^R \exp \paren{ - \beta \norm{x_i}} d\norm{x_i} = \frac{1}{\beta R} \bracket{1 - \exp \paren{-\beta R}}.
\end{equation}
\fi
\end{theorem}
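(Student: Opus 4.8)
The plan is to decompose the event ``the $m$-th reactive retransmission is successfully decoded'' into the conjunction of three sub-events and then condition on the number of interferers. A retransmission attempt by the typical user $x_0$ succeeds if and only if (i) $x_0$ has a \ac{los} link to the \ac{bs} on that \ac{rtt}, (ii) $x_0$ does not suffer a preamble collision, and (iii) the resulting \ac{sinr} exceeds the decoding threshold $\gamma$. Writing $\mathcal{P}_m^{reac} = \mathbb{P}\paren{x_0 \in LOS_m, \bar{C}, \mathrm{SINR}_m \geq \gamma}$ and applying the law of total probability over $N_m = n$, the summand factors as $\mathbb{P}(N_m = n)$ times the joint conditional probability of the three events given $N_m = n$. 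I would then invoke the chain rule to split off the \ac{los} event and the no-collision event, leaving the \ac{sinr} term in exactly the conditional form supplied by Lemma \ref{lemma:ps_reac}. The two structural facts that make this factorization clean are: the \ac{los} status of $x_0$ depends only on $\norm{x_0}$ and is independent of $N_m$ and of the collision event; and, conditioned on $N_m = n$, the collision event for $x_0$ depends only on the independent uniform preamble choices of the $n$ interferers, hence is independent of the channel gains and angles that determine the \ac{sinr}.

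Next I would establish the three explicit formulas. For (\ref{eq:non_col}): given $n$ interferers, each independently picks one of the $N_S$ subcarriers and, conditioned on sharing the subcarrier, one of the $N_P$ preambles; the typical user avoids collision with a given interferer with probability $1 - \tfrac{1}{N_S N_P}$ — though here, since the statement writes $\paren{1 - \tfrac{1}{N_s}}^n$, the relevant collision probability per interferer is taken as $\tfrac{1}{N_S}$ (the subcarrier-sharing event), and independence across the $n$ interferers gives the $n$-th power. For (\ref{eq:los_prob_m}): since $\norm{x_0} \sim Uniform(0,R)$, I integrate the blockage probability $\exp(-\beta\norm{x_i})$ from (\ref{eq:los_prob}) against the density $\tfrac{1}{R}$ over $[0,R]$, which is an elementary integral yielding $\tfrac{1}{\beta R}\bracket{1 - \exp(-\beta R)}$. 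For $\mathbb{P}(N_m = n)$ I simply cite (\ref{eq:prob_interf}), which already follows from the thinned-\ac{hppp} mean measure (\ref{eq:mean_measure}) and the Poisson law of the counting process restricted to $b(0,R)$, scaled by the active fraction $\mathcal{A}_m^{reac}$.

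Finally I would assemble the pieces: substitute the three closed forms and the Lemma \ref{lemma:ps_reac} expression into the total-probability sum to obtain (\ref{eq:reac_succ}). The main obstacle — really the only non-bookkeeping point — is justifying the conditional independence used in the factorization, in particular that conditioning on $N_m = n$ does not bias the spatial/angular distribution of those $n$ interferers in a way that would couple the collision indicator to the \ac{sinr}. This is where the properties of the Poisson point process are essential: by the conditioning property of a \ac{hppp} (and Slivnyak's theorem for the typical user), given that exactly $n$ \ac{los}-interferers lie in $b(0,R)$, their positions are i.i.d.\ with density proportional to $\lambda_I(x)$ on the ball, independent of everything else, so the preamble draws (which govern $\bar{C}$) and the channel realizations $g_i, \theta_i$ (which govern $I$ and hence $\mathrm{SINR}_m$) are mutually independent given $N_m = n$. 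Once that is in hand, the remaining manipulations are purely algebraic and the theorem follows.
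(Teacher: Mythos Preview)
Your proposal is correct and follows essentially the same route as the paper: the paper's own proof is a single sentence stating that the result is ``straightforward if the conditional probability obtained in Lemma~\ref{lemma:ps_reac} is averaged out,'' i.e., exactly the total-probability decomposition over $N_m$ together with the factorization into the \ac{los}, no-collision, and \ac{sinr} events that you spell out. Your justification of the conditional independence via the Poisson conditioning property and Slivnyak's theorem, and your derivations of (\ref{eq:non_col}) and (\ref{eq:los_prob_m}), supply detail that the paper omits; the only wrinkle is that, since $\Phi_I$ is already restricted to the typical user's subcarrier, the per-interferer collision probability should read $1/N_P$ rather than $1/N_S$, but this is a notational inconsistency in the statement itself and not a flaw in your argument.
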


\begin{proof}
The proof is straight forward if the conditional probability obtained in Lemma \ref{lemma:ps_reac} is averaged out.
\end{proof}

From the results of Theorem \ref{theorem:reac}, the latent access failure probability can be easily obtained by iteratively computing
\ifCLASSOPTIONtwocolumn
\begin{equation}
    \label{eq:reac_lafp}
    \mathcal{P}_F(T \leq \tau) = 
    \begin{cases}
        1 \text{, if } M^{reac} = 0 \\
        1 - \underset{m = 1}{\overset{M^{reac}}{\sum}} \mathcal{A}^{reac}_m \mathcal{P}^{reac}_m \text{,if } M^{reac} \geq 1.
    \end{cases}
\end{equation}
\else
\begin{equation}
    \label{eq:reac_lafp}
    \mathcal{P}_F(T \leq \tau) = 
    \begin{cases}
        1, &\text{, if } M^{reac} = 0 \\
        1 - \underset{m = 1}{\overset{M^{reac}}{\sum}} \mathcal{A}^{reac}_m \mathcal{P}^{reac}_m &\text{, if } M^{reac} \geq 1.
    \end{cases}
\end{equation}
\fi

\subsection{$K$-Repetition \ac{harq}}
\label{sec:krep_analysis}
In the $K$-repetition \ac{harq} system, the \ac{rtt} lasts from when the \ac{ue} transmits the first repetition until it receives the \ac{ack}/\ac{nack} feedback signal. Thus, under delay constraint $\tau$, the maximum number of retransmissions is
\begin{equation}
    M^{K_{rep}} = \ceil{\frac{\tau - 1}{T^{K_{rep}}_{RTT}}} = \ceil{\frac{\tau - 1}{K_{rep} + 3}}.
\end{equation}
Under the $K$-repetition \ac{harq}, the same data is repeated $K_{rep}$ times for every transmission attempt, and after the BS receives all the repetitions, it sends either an \ac{ack} or a \ac{nack} signal depending whether any of the repetitions sent in the transmission could be successfully decoded. Additionally, the \ac{ue} selects a new random subcarrier and preamble for the transmission of each distinct repetition. To obtain a closed-form expression for the latent access failure probability, we follow the same steps as were taken for the reactive \ac{harq} derivation.

\begin{lemma}
 \label{lemma:ps_krep}
    If $K \gg 1$, the probability that the $m$-th $K$-repetition \ac{harq} retransmission of the typical user, conditioned on the event that the typical user does not experience preamble collision, has a \ac{los} link and is affected by $n$ interferers can be approximated as 
    \ifCLASSOPTIONtwocolumn
    (\ref{eq:ps_krep}), shown at the top of next page,
    \begin{figure*}[ht!]
    \begin{eqnarray}
        \label{eq:ps_krep}
        \mathbb{P}\paren{\underset{l = 1}{\overset{K_{rep}}{\bigcup}}\mathrm{SINR}_{m,l} \geq \gamma \left \lvert \bar{C}, x_0 \in LOS_m, N_m = n \right.} \quad \nonumber \\ \approx \underset{l=1}{\overset{K_{rep}}{\sum}} {K_{rep} \choose l} \paren{-1}^{l+1} \underset{n^\prime = 0}{\overset{n}{\sum}} {n \choose n^\prime} \paren{\frac{2}{K}}^{n^\prime} \paren{1 - \frac{2}{K}}^{n - n^\prime} \nonumber \\ 
         \exp \paren{-\frac{l \gamma}{\rho K}} \bracket{\frac{\tanh^{-1}\paren{\sqrt{\frac{\gamma}{1 + \gamma}}}}{\sqrt{\gamma \paren{1 + \gamma}}}}^{l n^\prime}
    \end{eqnarray}
    \hrulefill
    \end{figure*}
    \else
    \begin{eqnarray}
        \label{eq:ps_krep}
        \mathbb{P}\paren{\underset{l = 1}{\overset{K_{rep}}{\bigcup}}\mathrm{SINR}_{m,l} \geq \gamma \left \lvert \bar{C}, x_0 \in LOS_m, N_m = n \right.} \quad \nonumber \\ \approx \underset{l=1}{\overset{K_{rep}}{\sum}} {K_{rep} \choose l} \paren{-1}^{l+1} \underset{n^\prime = 0}{\overset{n}{\sum}} {n \choose n^\prime} \paren{\frac{2}{K}}^{n^\prime} \paren{1 - \frac{2}{K}}^{n - n^\prime} \nonumber \\ 
         \exp \paren{-\frac{l \gamma}{\rho K}} \bracket{\frac{\tanh^{-1}\paren{\sqrt{\frac{\gamma}{1 + \gamma}}}}{\sqrt{\gamma \paren{1 + \gamma}}}}^{l n^\prime} ,
    \end{eqnarray}
    \fi
    where the double subscript $m,l$ indicates the $l$-th repetition of the $m$-th \ac{harq} retransmission attempt.
\end{lemma}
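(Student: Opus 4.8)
The plan is to obtain~(\ref{eq:ps_krep}) from Lemma~\ref{lemma:ps_reac} by an inclusion--exclusion expansion over the $K_{rep}$ repetitions. A transmission attempt succeeds as soon as one repetition clears the threshold, so the event of interest is the union $\bigcup_{l=1}^{K_{rep}}\curly{\mathrm{SINR}_{m,l}\geq\gamma}$. Writing $\mathcal{E}$ for the conditioning event $\curly{\bar{C},\,x_0\in LOS_m,\,N_m=n}$ and using that the repetitions are statistically interchangeable (so every $l$-subset of them contributes the same joint-success probability), the inclusion--exclusion principle gives
\begin{equation*}
\begin{split}
\mathbb{P}\!\paren{\underset{l=1}{\overset{K_{rep}}{\bigcup}}\curly{\mathrm{SINR}_{m,l}\geq\gamma}\mid\mathcal{E}} &= \underset{l=1}{\overset{K_{rep}}{\sum}}(-1)^{l+1}{K_{rep}\choose l} \\ &\quad\times \mathbb{P}\!\paren{\underset{i=1}{\overset{l}{\bigcap}}\curly{\mathrm{SINR}_{m,i}\geq\gamma}\mid\mathcal{E}},
\end{split}
\end{equation*}
which is exactly the outer structure of~(\ref{eq:ps_krep}): the binomial coefficient and the alternating sign come from this step alone.

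It remains to evaluate the joint success probability of a fixed block of $l$ repetitions. I would condition one level deeper on $n'$, the number of interferers that fall inside the primary lobe of the beam steered at the typical user. As in the proof of Lemma~\ref{lemma:ps_reac}, the Fej\'er-kernel concentration~(\ref{eq:fejer_property}) implies that for $K\gg1$ an interferer lies in the lobe independently with probability $\approx 2/K$ and contributes essentially nothing otherwise, so $n'$ is $Binomial(n,2/K)$ and, given $n'$, a single repetition succeeds with probability $\exp(-\gamma/(\rho K))\bracket{\tanh^{-1}(\sqrt{\gamma/(1+\gamma)})/\sqrt{\gamma(1+\gamma)}}^{n'}$ --- precisely the per-$n'$ term inside Lemma~\ref{lemma:ps_reac}. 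Once the interference geometry is fixed, the only randomness refreshed between repetitions is the small-scale fading (the desired gain $g_0$ and the interferer gains $g_i$, drawn independently for each repetition, as are the subcarrier and preamble selections), so the $l$ per-repetition events are conditionally independent given $n'$; their joint probability is therefore the $l$-th power of the single-repetition expression, i.e. $\exp(-l\gamma/(\rho K))\bracket{\tanh^{-1}(\sqrt{\gamma/(1+\gamma)})/\sqrt{\gamma(1+\gamma)}}^{ln'}$. Averaging this over $n'\sim Binomial(n,2/K)$ produces the inner double sum of~(\ref{eq:ps_krep}), and combining with the inclusion--exclusion weights from the first step yields the claim.

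The main obstacle --- and the source of the ``$\approx$'' in the statement --- is the conditional-independence argument in the second step. Strictly speaking the set of users colliding with the typical user changes from repetition to repetition, because a fresh subcarrier is drawn each time, so neither $n$ nor $n'$ is genuinely shared by all $K_{rep}$ repetitions; the derivation treats the interference configuration as frozen within a transmission attempt and lets only the fading vary. I would make this approximation explicit, note that it is accurate precisely in the regime $K\gg1$ assumed throughout (the lobe-overlap probability $2/K$ and the out-of-lobe residual interference are both $O(1/K)$), and otherwise reuse the Laplace-functional and Fej\'er-kernel computations already carried out for Lemma~\ref{lemma:ps_reac} rather than redoing them.
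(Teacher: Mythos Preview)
Your inclusion--exclusion decomposition is equivalent to the paper's route: the paper takes the complement, factors $1-\prod_l(1-\mathbb{P}(\mathrm{SINR}_{m,l}\ge\gamma))$, and then binomial-expands, which yields the same outer $\sum_l {K_{rep}\choose l}(-1)^{l+1}$ structure. The substantive difference is in the independence argument. You freeze the interference geometry (same $n$ and $n'$ across repetitions) and let only the small-scale fading refresh, obtaining conditional independence given $n'$; this is what literally produces~(\ref{eq:ps_krep}), with the $l$-th power sitting \emph{inside} the $n'$-average. The paper instead argues that because each repetition draws a fresh subcarrier, the interferer set itself is resampled from the same thinned process, so the $K_{rep}$ SINRs are mutually independent and identically distributed outright; the joint success of $l$ repetitions is then $[\mathbb{P}(\mathrm{SINR}_m\ge\gamma\mid\cdot)]^l$ with Lemma~\ref{lemma:ps_reac} plugged in and the steps of Appendix~\ref{app:ps_reac} repeated. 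The paper's justification is closer to the system model (fresh subcarrier $\Rightarrow$ fresh interferer set), and your final paragraph correctly identifies that tension; your route, on the other hand, is the one that reproduces the displayed expression verbatim. The two versions coincide to leading order for $K\gg1$, which is the regime of the lemma, so either argument is acceptable as an approximation.
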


\begin{proof}
See Appendix \ref{app:ps_krep}.
\end{proof}

With the result from Lemma \ref{lemma:ps_krep}, the probability that the $m$-th retransmission attempt is successful can be obtained by averaging (\ref{eq:ps_krep}) over the conditional random variables.

\begin{theorem}
\label{theorem:krep}
The probability that the $m$-th $K$-repetition \Ac{harq} retransmission is successfully decoded is given by 
\ifCLASSOPTIONtwocolumn
(\ref{eq:krep_succ}), shown at the top of next page,
\begin{figure*}[ht!]
\begin{eqnarray}
    \label{eq:krep_succ}
    \mathcal{P}_m^{K_{rep}} = &\underset{n = 0}{\overset{\infty}{\sum}}& \mathbb{P}(N_m = n) \mathbb{P}\paren{\left. \bar{C} \right.\lvert N_m = n} \nonumber\\ &&\mathbb{P}(x_0 \in LOS) \mathbb{P}\paren{\underset{l = 1}{\overset{K_{rep}}{\bigcup}}\mathrm{SINR}_{m,l} \geq \gamma \left \lvert \bar{C}, x_0 \in LOS_m, N_m = n \right.},
\end{eqnarray}
\hrulefill
\end{figure*}
\else
\begin{eqnarray}
    \label{eq:krep_succ}
    \mathcal{P}_m^{K_{rep}} = &\underset{n = 0}{\overset{\infty}{\sum}}& \mathbb{P}(N_m = n) \mathbb{P}\paren{\left. \bar{C} \right.\lvert N_m = n} \nonumber\\ &&\mathbb{P}(x_0 \in LOS) \mathbb{P}\paren{\underset{l = 1}{\overset{K_{rep}}{\bigcup}}\mathrm{SINR}_{m,l} \geq \gamma \left \lvert \bar{C}, x_0 \in LOS_m, N_m = n \right.},
\end{eqnarray}
\fi
where the closed-form expression for $\mathbb{P}\paren{\underset{l = 1}{\overset{K_{rep}}{\bigcup}}\mathrm{SINR}_{m,l} \geq \gamma \left \lvert \bar{C}, x_0 \in LOS_m, N_m = n \right.}$ is derived on Lemma \ref{lemma:ps_krep}.
\end{theorem}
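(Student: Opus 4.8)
The plan is to proceed exactly as in the proof of Theorem \ref{theorem:reac}: average the conditional success probability of Lemma \ref{lemma:ps_krep} over the random quantities on which it is conditioned, using the law of total probability.

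First I would write the event ``the $m$-th $K$-repetition attempt of the typical user $x_0$ succeeds'' as the intersection of (i) no preamble collision for that attempt, $\bar{C}$, (ii) a \ac{los} link for that attempt, $\curly{x_0 \in LOS_m}$, and (iii) at least one of the $K_{rep}$ repetitions being decoded, $\bigcup_{l=1}^{K_{rep}} \curly{\mathrm{SINR}_{m,l} \geq \gamma}$; then $\mathcal{P}_m^{K_{rep}}$ is the probability of this intersection. Conditioning on the number of interferers $N_m = n$ and applying the chain rule gives $\mathcal{P}_m^{K_{rep}} = \sum_{n=0}^{\infty} \mathbb{P}(N_m = n)\, \mathbb{P}(\bar{C} \mid N_m = n)\, \mathbb{P}(x_0 \in LOS_m \mid \bar{C}, N_m = n)\, \mathbb{P}\paren{\bigcup_{l=1}^{K_{rep}} \curly{\mathrm{SINR}_{m,l} \geq \gamma} \mid \bar{C}, x_0 \in LOS_m, N_m = n}$.

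Next I would identify each factor. The first is the Poisson \ac{pmf} (\ref{eq:prob_interf}), obtained from the mean measure (\ref{eq:mean_measure}) of the \ac{los} interferer process scaled by the active fraction $\mathcal{A}_m^{K_{rep}}$. The second is the no-collision probability (\ref{eq:non_col}), $\paren{1 - 1/N_S}^n$, identical in form to the reactive case: given $n$ co-resource interferers, $\bar{C}$ is the event that none of them collides with $x_0$'s selection. For the third, I would invoke Slivnyak's theorem to argue that the typical user's distance $\norm{x_0}$ --- hence its blockage state --- is independent of the interferer point process and of the interferers' independent resource selections, so the conditioning drops and the factor equals the distance-averaged blockage probability (\ref{eq:los_prob_m}), $\frac{1}{\beta R}\bracket{1 - \exp\paren{-\beta R}}$. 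The fourth factor is exactly the large-$K$ approximation established in Lemma \ref{lemma:ps_krep}. Substituting the four pieces into the sum yields (\ref{eq:krep_succ}).

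The substantive work has already been absorbed into Lemma \ref{lemma:ps_krep}, whose proof needs the inclusion--exclusion expansion over the $K_{rep}$ repetitions together with the Fej\'er-kernel asymptotics of (\ref{eq:fejer_property}); at the level of the theorem the only care required is to justify the factorization, i.e.\ that once $N_m$ is fixed the collision event (the interferers' preamble/subcarrier draws), the \ac{los} event (the typical user's location and the blockage field), and the decoding events (the fading coefficients $g_i$ and the beamforming gains) are governed by independent sources of randomness, and that the series in $n$ converges --- which is immediate since every summand lies in $[0,1]$ and $\sum_{n} \mathbb{P}(N_m = n) = 1$. So the main obstacle is conceptual bookkeeping of the conditional independence rather than any hard estimate.
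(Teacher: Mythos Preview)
Your proposal is correct and matches the paper's approach exactly: the paper does not even give a separate proof environment for Theorem~\ref{theorem:krep}, merely remarking (in the sentence preceding the theorem and by analogy with the one-line proof of Theorem~\ref{theorem:reac}) that the result follows by averaging the conditional probability of Lemma~\ref{lemma:ps_krep} over the conditioning variables. Your write-up is in fact more detailed than what the paper provides, since you spell out the chain-rule factorization and the independence justifications that the paper leaves implicit.
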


Given the analytical expression for the probability that the $m$-th $K$-repetition \ac{harq} retransmission is successfully received by the \ac{bs} in Theorem \ref{theorem:krep} and the fact that the probability that a randomly selected \ac{ue} is active can be computed from (\ref{eq:active_ratio}), the latent access failure probability is derived as
\ifCLASSOPTIONtwocolumn
\begin{equation}
    \label{eq:krep_lafp}
    \mathcal{P}_F(T \leq \tau) = 
    \begin{cases}
        1, \text{, if } M^{K_{rep}} = 0 \\
        1 - \underset{m = 1}{\overset{M^{K_{rep}}}{\sum}} \mathcal{A}^{K_{rep}}_m \mathcal{P}^{K_{rep}}_m \text{,}\\ \text{if } M^{K_{rep}} \geq 1.
    \end{cases}
\end{equation}
\else
\begin{equation}
    \label{eq:krep_lafp}
    \mathcal{P}_F(T \leq \tau) = 
    \begin{cases}
        1, &\text{, if } M^{K_{rep}} = 0 \\
        1 - \underset{m = 1}{\overset{M^{K_{rep}}}{\sum}} \mathcal{A}^{K_{rep}}_m \mathcal{P}^{K_{rep}}_m &\text{, if } M^{K_{rep}} \geq 1.
    \end{cases}
\end{equation}
\fi

\section{Numerical Results and Discussion}
\label{sec:numerical_results}
In this section, we report the results of Monte-Carlo simulations of the system model described in Section \ref{sec:system_model}. We use the simulation results to: a) validate the closed-form analytical approximations derived in Section \ref{sec:system_analysis} b) characterize the performance of the two \ac{harq} protocols in the \ac{mmwave} massive \ac{mimo} scenario and c) discuss the insights provided by the analytical results.

At the beginning of each simulation instance, the users' locations are generated according to an \ac{hppp} inside a cell with radius $R = 0.5$ km. At every \ac{tti}:

\begin{itemize}
    \item The channel gain between the \acp{ue} and the \ac{bs} located at the origin is generated as an exponential random variable with unit mean.
    \item All active \acp{ue} are determined to have either a \ac{los} or \ac{nlos} link according to the probability in (\ref{eq:los_prob_m}), with $\beta = 1$.
    \item All active \acp{ue} select a random subcarrier from one of the $N_S=48$ subcarriers available.
    \item All active \acp{ue} select a random preamble from one of the $N_P=64$ preambles available.
    \item The \ac{bs} checks all \acp{ue} with \ac{los} links on every subcarrier for preamble collision.
    \item The \ac{bs} computes the dot product between the signal received and the conjugate beam for all the \acp{ue} whose preambles have not collided. If the resulting \ac{sinr} is greater than $\gamma = -2$ dB, the transmission is successful, otherwise it fails.
    \item The \ac{bs} sends an \ac{ack} feedback signal to the \acp{ue} whose transmission was successful and a \ac{nack} feedback signal to those whose transmission attempt failed. As the main goal of this work is to characterize grant-free uplink performance, we assume that the feedback sent through the downlink channel is error free.
    \item All \acp{ue} move to a new location.
\end{itemize}

In accordance with \ac{3gpp} standards \cite{3gpp.38.101, 3gpp.38.211}, we consider a \ac{tti} mini-slot having a duration of $0.125$ ms and a subcarrier spacing of $60$ kHz, which is a configuration compatible with 5G NR \ac{fr2} operation, located in the \ac{mmwave} spectrum. We consider a noise figure of $-174$ dBm/Hz, a path loss exponent of $\alpha=4$ and a received power threshold of $\rho = -130 dBm$.

\ifCLASSOPTIONtwocolumn
\begin{figure*}
    \centering {
    \includegraphics[width=2\columnwidth]{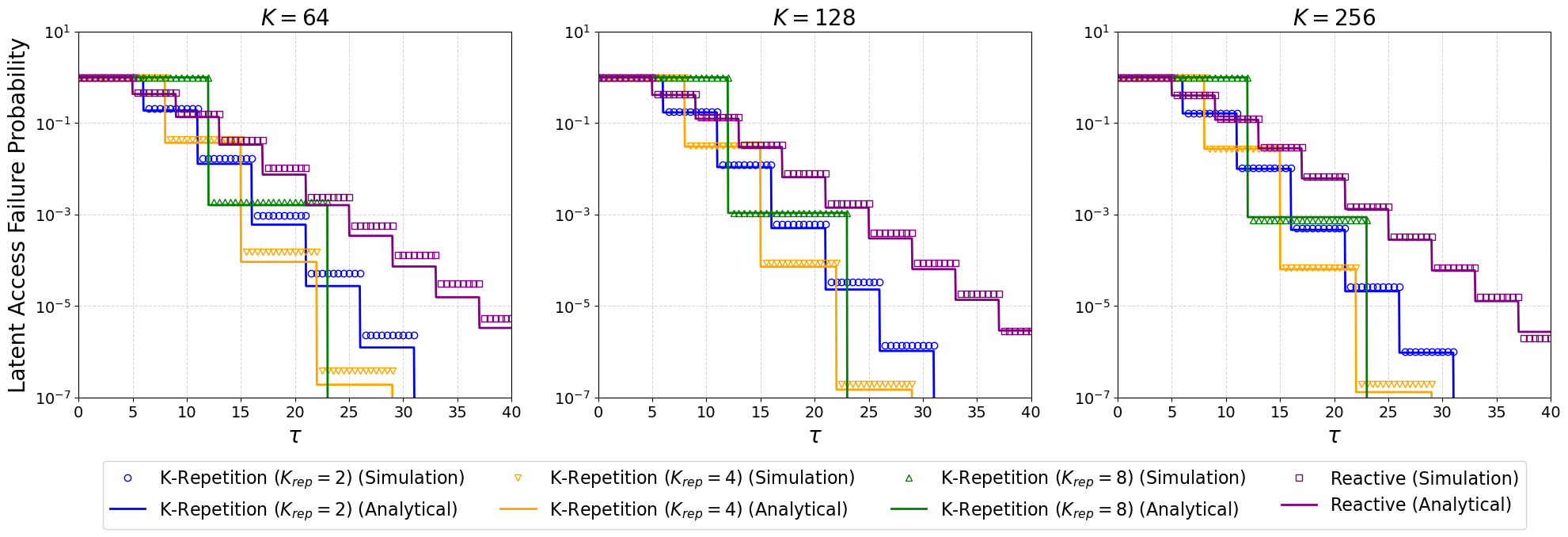}
    \caption{CCDF of the latent access failure probability for $\lambda_U = 1000$ \ac{ue}/$\text{km}^2$ for the reactive and $K$-repetition \ac{harq} protocols with $K_{rep} = 2, 4, 8$. The plots in the figure show the results for $K=64$, $K=128$ and $K=256$ antennas.}
    \label{fig:lafp_ccdf_lambda1000}
    }
\end{figure*}
\else
\begin{figure}
    \centering {
    \includegraphics[width=\columnwidth]{figs/lafp_gamma2_lambda1000.png}
    \caption{CCDF of the latent access failure probability for $\lambda_U = 1000$ \ac{ue}/$\text{km}^2$ for the reactive and $K$-repetition \ac{harq} protocols with $K_{rep} = 2, 4, 8$. The plots in the figure show the results for $K=64$, $K=128$ and $K=256$ antennas.}
    \label{fig:lafp_ccdf_lambda1000}
    }
\end{figure}
\fi
\ifCLASSOPTIONtwocolumn
\begin{figure*}
    \centering {
    \includegraphics[width=2\columnwidth]{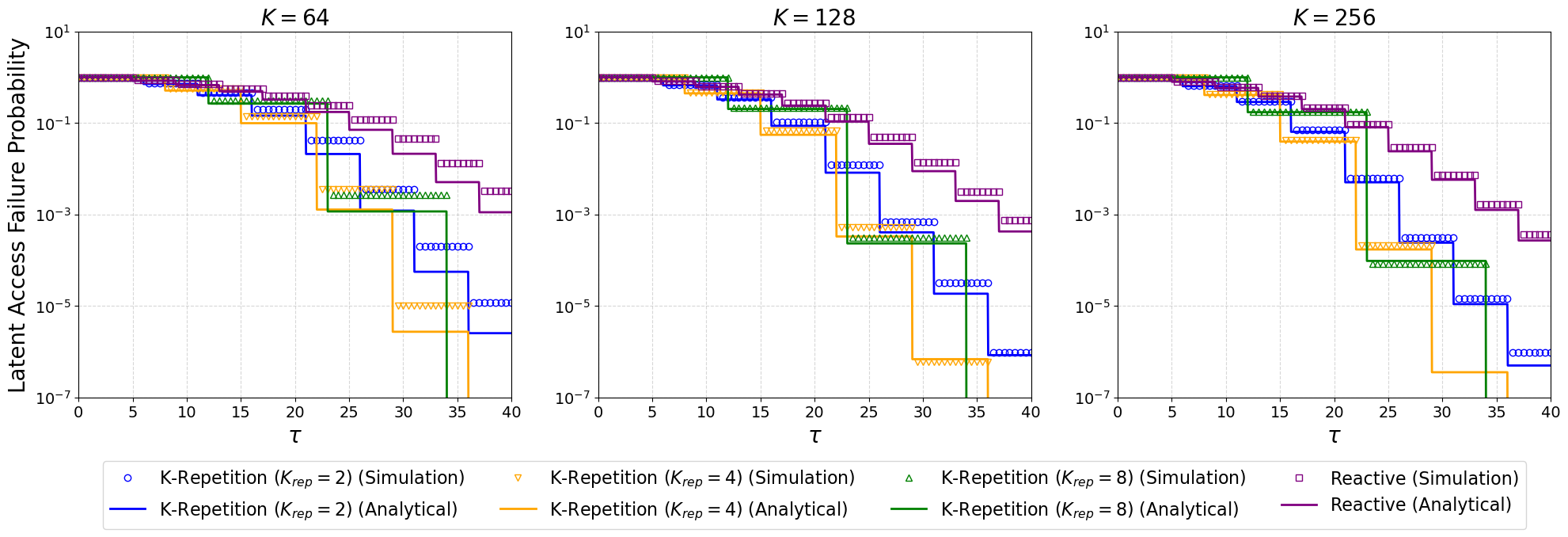}
    \caption{CCDF of the latent access failure probability for $\lambda_U = 5000$ \ac{ue}/$\text{km}^2$ for the reactive and $K$-repetition \ac{harq} protocols with $K_{rep} = 2, 4, 8$. The plots in the figure show the results for $K=64$, $K=128$ and $K=256$ antennas.}
    \label{fig:lafp_ccdf_lambda5000}
    }
\end{figure*}
\else
\begin{figure}
    \centering {
    \includegraphics[width=\columnwidth]{figs/lafp_gamma2_lambda5000.png}
    \caption{CCDF of the latent access failure probability for $\lambda_U = 5000$ \ac{ue}/$\text{km}^2$ for the reactive and $K$-repetition \ac{harq} protocols with $K_{rep} = 2, 4, 8$. The plots in the figure show the results for $K=64$, $K=128$ and $K=256$ antennas.}
    \label{fig:lafp_ccdf_lambda5000}
    }
\end{figure}
\fi

Figs. \ref{fig:lafp_ccdf_lambda1000} and \ref{fig:lafp_ccdf_lambda5000} show the \ac{ccdf} of the latent access probability for a user density of $\lambda_U = 1000$ \ac{ue}/$\text{km}^2$ and $\lambda_U = 5000$ \ac{ue}/$\text{km}^2$, respectively. The three plots in each figure display the performance for $64$, $128$ and $256$ antennas, from left to right. The behavior of the performance curves, where the latent access failure probability remains constant for a period of time and then drops on the following \ac{tti}, is due to the transmission propagation time on the uplink and the feedback, and the processing times. Fig. \ref{fig:lafp_ccdf_lambda1000} depicts the performance with a moderate \ac{ue} density scenario and shows that the reactive HARQ protocol is the best option for strict delay constraints, with $\tau \leq 6$ \acp{tti} ($0.725$ ms), as there is no time for any of the $K$-repetition configurations to finish their first round-trip. When the first and second round-trips for $K_{rep} = 2$ are completed, it has the best performance in $6 \leq \tau \leq 8$ \acp{tti} and $11 \leq \tau \leq 12$ \ac{tti} intervals. From this point on, the best performance is dominated by $K_{rep} = 4$ and $K_{rep} = 8$, with the best configuration being the one that has more completed round-trips in under $\tau$ \acp{tti}. A similar trend occurs with a higher user density as shown in Fig. \ref{fig:lafp_ccdf_lambda5000}.

Tables \ref{tab:lambda_1000} and \ref{tab:lambda_5000} show the reduction in the latent access failure probability upon increasing the number of antennas from $64$ to $256$. There is little improvement for a delay constraint of $1$ ms in either scenario. In applications with a moderate \ac{ue} density and a delay constraint of $2$ ms or more, we notice an average improvement of around $2$ across both \ac{harq} protocols investigated, while in applications with a higher user density, the failure probability is reduced by as much as $32$ times for $K_{rep} = 8$ repetitions and a delay constraint greater or equal to $3$ ms.
\begin{table}[ht!]
    \centering
    \begin{tabular}{c|ccc}
    \hline
        HARQ & $T \leq 1$ ms & $T \leq 2$ ms & $T \leq 3$ ms \\
    \hline
    \hline
        $K_{rep} = 2$ & 1.25 & 1.59 & 1.982 \\
        $K_{rep} = 4$ & 1 & 2.18 & 1.98 \\
        $K_{rep} = 8$ & 1 & 2.49 & - \\
        Reactive & 1.12 & 1.42 & 1.64 \\
    \end{tabular}
    \caption{Latent access failure probability reduction in increasing from $64$ to $256$ antennas when $\lambda_U = 1000$ \ac{ue}/$\text{km}^2$}
    \label{tab:lambda_1000}
\end{table}
\begin{table}[ht!]
    \centering
    \begin{tabular}{c|ccc}
    \hline
        HARQ & $T \leq 2$ ms & $T \leq 3$ ms & $T \leq 4$ ms \\
    \hline
    \hline
        $K_{rep} = 2$ & 1.54 & 6.81 & 13.69 \\
        $K_{rep} = 4$ & 3.25 & 17.13 & - \\
        $K_{rep} = 8$ & 1.75 & 32.51 & 32.51 \\
        Reactive & 1.40 & 2.60 & 6.18 \\
    \end{tabular}
    \caption{Latent access failure probability reduction in increasing from $64$ to $256$ antennas when $\lambda_U = 5000$ \ac{ue}/$\text{km}^2$}
    \label{tab:lambda_5000}
\end{table}
Nonetheless, notice from Figs. \ref{fig:lafp_ccdf_lambda1000} and \ref{fig:lafp_ccdf_lambda5000} that increasing the number of antennas from $128$ to $256$ does not change the latency performance significantly. Additionally, the increase in the number of antennas has a larger impact on performance for the higher user density scenario shown in Fig. \ref{fig:lafp_ccdf_lambda5000} than for the moderate density one in Fig. \ref{fig:lafp_ccdf_lambda1000}. Later in this section, we discuss why this happens and how to possibly address it. 

As the approximation used to derive the results in Section \ref{sec:system_analysis} relies on $K \gg 1$, there is a gap between the analytical and simulation results when $K = 64$ as, in this regime, the value of $F_K(x)$, and consequently the interference, outside the main lobe are no longer negligible in comparison to the gain on the main lobe.

\ifCLASSOPTIONtwocolumn
\begin{figure*}
    \centering {
    \includegraphics[width=2\columnwidth]{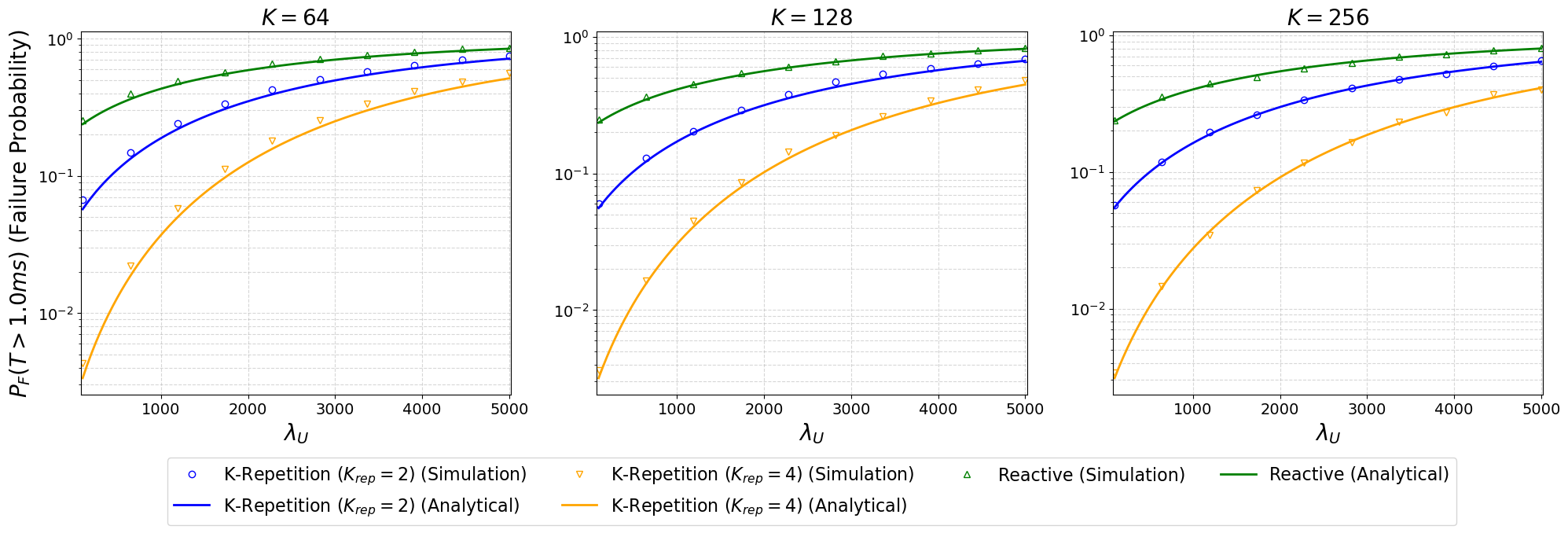}
    \caption{The probability that an \ac{ue} fails to transmit its packet under $\tau = 1$ ms for an user density ranging from $100$ \ac{ue}/$\text{km}^2$. The plots show the results for $K=64$, $K=128$ and $K=256$ antennas, respectively.}
    \label{fig:reliability_1ms}
    }
\end{figure*}
\else
\begin{figure}
    \centering {
    \includegraphics[width=\columnwidth]{figs/reliability-1ms.png}
    \caption{The probability that an \ac{ue} fails to transmit its packet under $\tau = 1$ ms for an user density ranging from $100$ \ac{ue}/$\text{km}^2$. The plots show the results for $K=64$, $K=128$ and $K=256$ antennas, respectively.}
    \label{fig:reliability_1ms}
    }
\end{figure}
\fi

\ifCLASSOPTIONtwocolumn
\begin{figure*}
    \centering {
    \includegraphics[width=2\columnwidth]{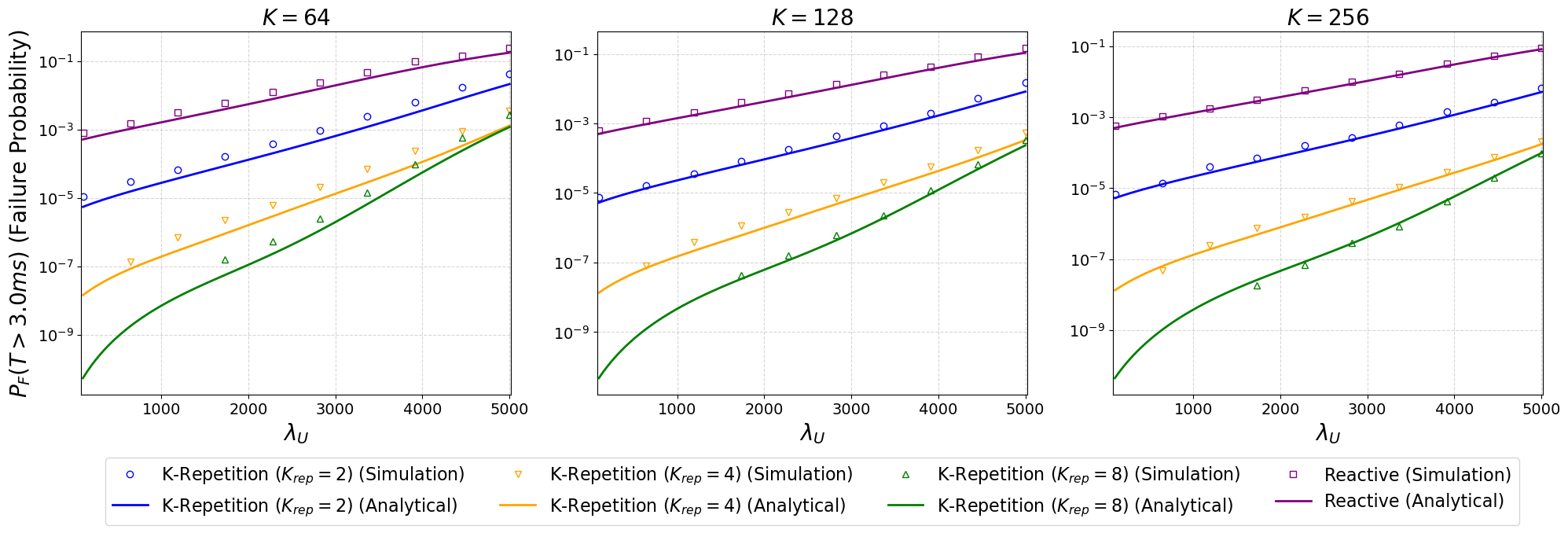}
    \caption{The probability that an \ac{ue} fails to transmit its packet under $\tau = 3$ ms for an user density ranging from $100$ \ac{ue}/$\text{km}^2$. The plots show the results for $K=64$, $K=128$ and $K=256$ antennas, respectively.}
    \label{fig:reliability_3ms}
    }
\end{figure*}
\else
\begin{figure}
    \centering {
    \includegraphics[width=\columnwidth]{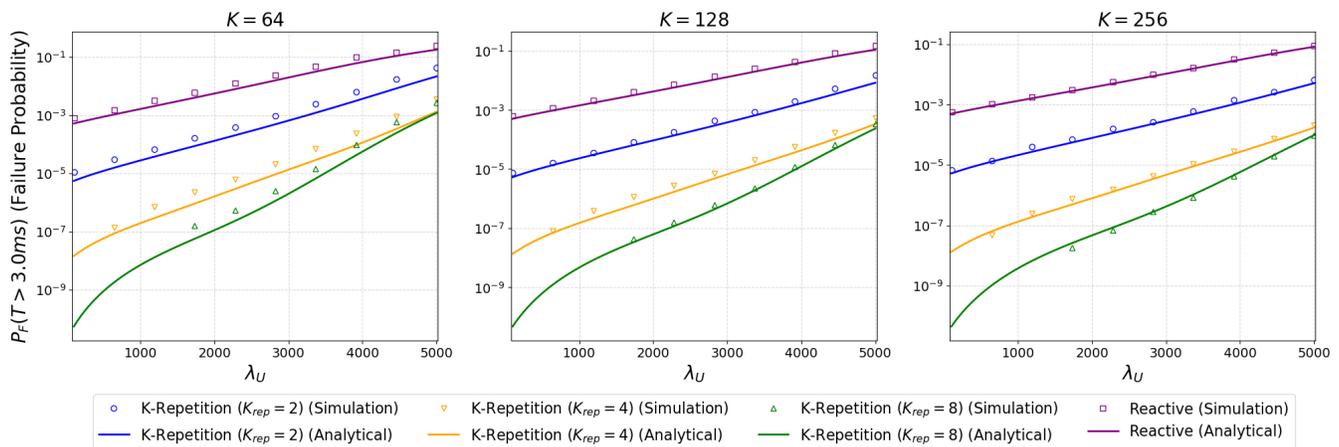}
    \caption{The probability that an \ac{ue} fails to transmit its packet under $\tau = 3$ ms for an user density ranging from $100$ \ac{ue}/$\text{km}^2$. The plots show the results for $K=64$, $K=128$ and $K=256$ antennas, respectively.}
    \label{fig:reliability_3ms}
    }
\end{figure}
\fi

Figs. \ref{fig:reliability_1ms} and \ref{fig:reliability_3ms} show how system reliability, i.e., in the probability of transmission failure under the latency constraint $\tau$, scales with an increase in user density for a latency constraint of $\tau = 1$ ms and $\tau=3$ ms, respectively. In both figures the user density ranges from $100$ to $5000$ \ac{ue}/$\text{km}^2$. Fig. \ref{fig:reliability_1ms} shows that that the combination of \ac{mmwave} and massive \ac{mimo} is not enough to satisfy the \ac{qos} requirement of \ac{urllc} applications with the stricter delay constraint (failure probability below $10^{-6}$). Also, in this latency range, the benefit from increasing the number of \ac{bs} antennas is rather small. For applications with less stringent delay constraints, shown in Fig. \ref{fig:reliability_3ms}, the $K$-repetition \ac{harq} protocol with $K_{rep} = 4$ and $K_{rep} = 8$ is able to support the \ac{urllc} \ac{qos} requirements. Table \ref{tab:qos_satisfy} depicts the highest \ac{ue} density that can be supported by each \ac{harq} and \ac{mimo} configuration. When $K_{rep} = 4$, increasing the number of \ac{bs} antennas from $64$ to $128$ increases the supported user density by $11\%$ and increasing the number from $128$ to $256$ increases it by only $7.5\%$. For $K_{rep} = 8$, those values are $14\%$ and $5\%$, respectively. It is worth noting that as long as the \ac{qos} constraints are satisfied, it is desirable to use the \ac{harq} configuration with the least number of repetitions as possible in order to save \ac{ue} power.
\begin{table*}[ht!]
    \centering
    \begin{tabular}{c|ccc}
    \hline
        HARQ & $K = 64$ antennas & $K= 128$ antennas & $K = 256$ antennas \\
    \hline
    \hline
        $K_{rep} = 2$ & - & - & - \\
        $K_{rep} = 4$ & 1800 \ac{ue}/$\text{km}^2$ & 2000 \ac{ue}/$\text{km}^2$ & 2150 \ac{ue}/$\text{km}^2$ \\
        $K_{rep} = 8$ & 2800 \ac{ue}/$\text{km}^2$ & 3200 \ac{ue}/$\text{km}^2$ & 3350 \ac{ue}/$\text{km}^2$ \\
        Reactive & - & - & - \\
    \end{tabular}
    \caption{User density supported (failure probability below $10^{-6}$) by each configuration.}
    \label{tab:qos_satisfy}
\end{table*}
\ifCLASSOPTIONtwocolumn
\begin{figure*}
    \centering {
    \includegraphics[width=1.7\columnwidth]{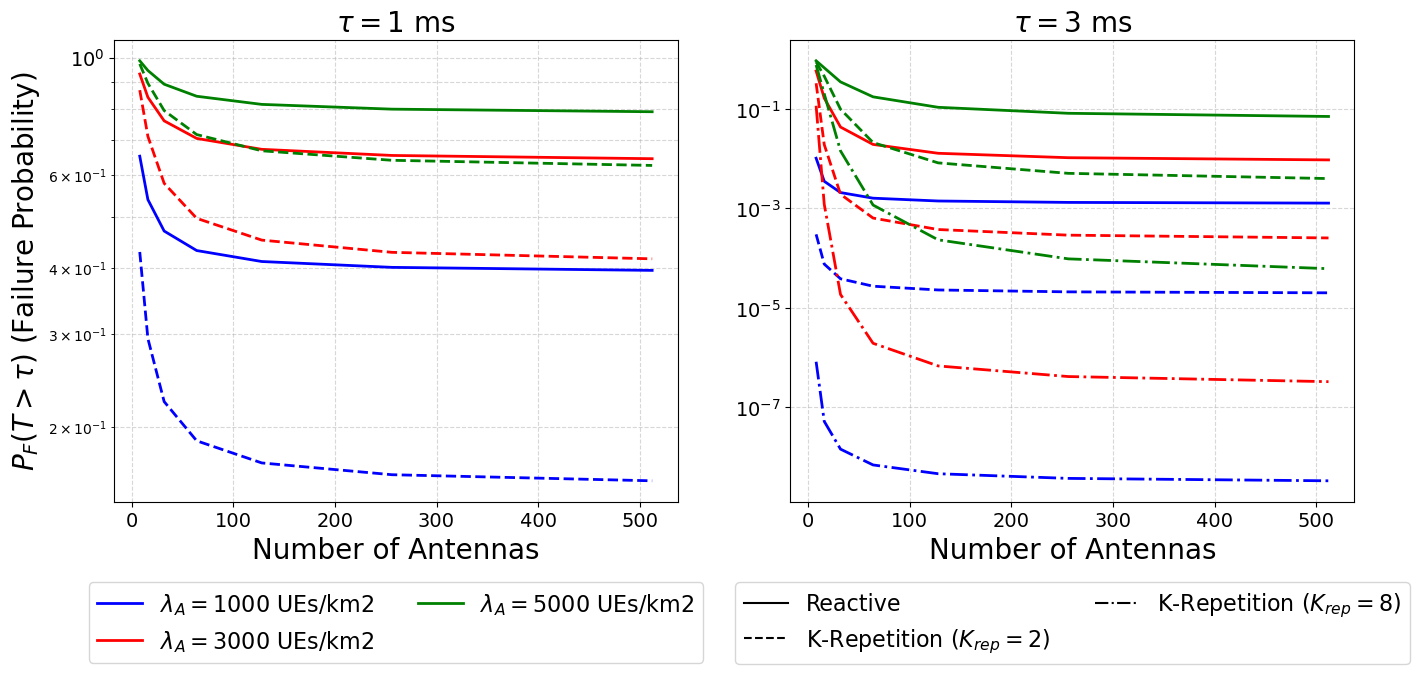}
    \caption{The impact of the number of antennas on the latent access failure probability. The leftmost plot shows results for delay constraint $\tau = 1$ ms, while the rightmost for $\tau = 3$ ms.}
    \label{fig:antenna_saturation}
    }
\end{figure*}
\else
\begin{figure}
    \centering {
    \includegraphics[width=.8\columnwidth]{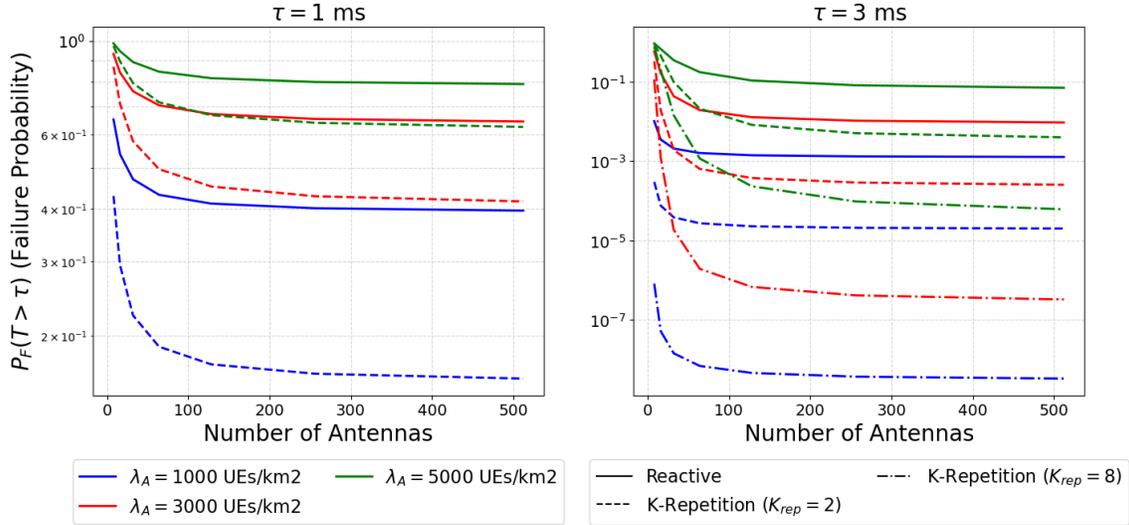}
    \caption{The impact of the number of antennas on the latent access failure probability. The leftmost plot shows results for delay constraint $\tau = 1$ ms, while the rightmost for $\tau = 3$ ms.}
    \label{fig:antenna_saturation}
    }
\end{figure}
\fi
In Fig. \ref{fig:antenna_saturation}, we show the impact of increasing the number of antennas on the failure probability for a delay constraint of $\tau = 1$ ms on the left and $\tau = 3$ ms on the right. From this figure, we can conclude that increasing the number of antennas beyond $100$ for the configuration under consideration ($R = 0.5$ km and $\beta = 1$) has a decreasing impact on latency performance. Moreover, the $K$-repetition \ac{harq} protocol benefits more from an increased number of antennas than the reactive \ac{harq} protocol does. Also, the plot on the right shows that the latency performance of applications with a moderate latency constraint ($\tau = 3$ ms) and higher user densities ($\lambda_U \geq 3000$ \ac{ue}/$\text{km}^2$) is greatly improved by changing from traditional \ac{mimo} to massive \ac{mimo}. This is explained by the capability of producing narrower receiver beams on systems with a higher number of antennas. Nevertheless, as of a certain point, the probability of having a \ac{los} link becomes the dominant bottleneck in reducing the latent access failure probability. As the \ac{los} link probability is unaffected by the number of antennas, another measure must be taken to further reduce the latency. In the system model formulated in this paper, one way to achieve this would be to increase the \ac{bs} deployment density, effectively decreasing the radius of the cells.

\section{Conclusions}
\label{sec:conclusions}
In this work, we formulated a model to analyze the latency and reliability of \ac{mmwave} massive \ac{mimo} \ac{urllc} applications using reactive and $K$-repetition \ac{harq} protocols. We used stochastic geometric spatiotemporal tools to derive closed-form approximations of the system's latent access failure probability. We validated the analytical results using Monte-Carlo simulations, identifying the limitations of our analytical results. Also, we investigated how the system's performance is impacted by the application's latency constraint, the density of \acp{ue} served by the system, and the number of antennas in the \ac{bs}. We concluded that:
\begin{itemize}
    \item Other than for extremely strict delay constraints ($\tau = 0.625$ ms), the $K$-repetition \ac{harq} protocol is a better choice.
    \item Increasing the number of \ac{bs} antennas from $64$ to $256$ \ac{bs} antennas can reduce the latent access failure probability by a factor of 32 for the cell configuration analyzed in the manuscript.
    \item Massive \ac{mimo}'s interference reduction capability significantly improves the reliability of systems with high user density and moderately improves the performance of systems with low user density.
    \item The increase in reliability from increasing the number of \ac{bs} antennas beyond $100$ is greatly reduced in the configuration investigated in Section \ref{sec:numerical_results}, as the probability of having a \ac{los} link between the \ac{ue} and the \ac{bs} becomes the main bottleneck.
    \item Under the configurations investigated in this manuscript, the system can support a \ac{ue} density as high as $3350$ \ac{ue}/$\text{km}^2$ for a \ac{urllc} application with latency and reliability constraints of $3$ ms and $10^{-6}$, respectively.
\end{itemize}
Overall, we can conclude that it is possible to increase the reliability of \ac{urllc} applications by using \ac{mmwave} massive \ac{mimo}, and when this technique is combined with selecting reasonable configuration parameters, these two techniques together can improve reliability under a strict latency constraint ($\tau = 1$ ms) and can satisfy \ac{urllc} \ac{qos} requirements under a less strict latency constraint ($\tau = 3$ ms).

\appendices

\section{Proof of Lemma \ref{lemma:ps_reac}}
\label{app:ps_reac}
The probability of success in (\ref{eq:ps_reac}) can be expanded to
\ifCLASSOPTIONtwocolumn
\begin{equation}
\begin{aligned}
    &\mathbb{P}\paren{\mathrm{SINR}_m \geq \gamma \left \lvert TX_0, N_m = n \right.} \\ 
    = &\mathbb{P} \curly{ \left.
        \abs{g_0}^2 \geq \frac{\gamma}{\rho K} \bracket{\sigma^2 + I} \right \rvert  TX_0, N_m = n 
    } \\
    =& \exp \paren{-\frac{\gamma \sigma^2}{\rho K}} \mathcal{L}_I\paren{s \lvert TX_0, N_m = n },
\end{aligned}
\end{equation}
\else
\begin{eqnarray}
    &&\mathbb{P}\paren{\mathrm{SINR}_m \geq \gamma \left \lvert \bar{C}, x_0 \in LOS_m, N_m = n \right.} \nonumber \\ &=& \mathbb{P} \curly{ \left.
        \abs{g_0}^2 \geq \frac{\gamma}{\rho K} \bracket{\sigma^2 + I} \right \rvert  \bar{C}, x_0 \in LOS_m, N_m = n 
    } \nonumber \\
    &=& \exp \paren{-\frac{\gamma \sigma^2}{\rho K}} \mathcal{L}_I\paren{s \lvert \bar{C}, x_0 \in LOS_m, N_m = n },
\end{eqnarray}
\fi
where $s = \frac{\gamma}{\rho K}$, $I = \underset{x_i \in \Phi_I}{\sum} \rho \abs{g_i}^2 F_K\paren{\frac{\pi}{2} \paren{\theta_0 - \theta_i}}$ is the interference on the typical user's transmission and $\mathcal{L}_I\paren{s \lvert \bar{C}, x_0 \in LOS_m, N_m = n }$ is the Laplace transform of the interference conditioned on the user not experiencing preamble collision, them having a \ac{los} path, and there being $n$ interferers in the cell.

The Laplace transform of the interference can be derived as
\ifCLASSOPTIONtwocolumn
 (\ref{eq:reac_lap_step1}), shown at the top of the next page.
\begin{figure*}[ht!]
\begin{eqnarray}
    \label{eq:reac_lap_step1}
    \mathcal{L}_I\paren{s \lvert TX_0, N_m = n } 
    &=& E_{g_i, \theta_i} \curly{\exp \bracket{
        -s \underset{x_i \in \Phi_I}{\sum} \rho \abs{g_i}^2 F_K\paren{\frac{\pi}{2} \paren{\theta_0 - \theta_i}}
    }} \nonumber \\
    &=& \underset{x_i \in \Phi_I}{\prod} E_{g_i, \theta_i} \curly{\exp \bracket{
        -s \rho \abs{g_i}^2 F_K\paren{\frac{\pi}{2} \paren{\theta_0 - \theta_i}}
    }} \nonumber \\
    &=& \underset{x_i \in \Phi_I}{\prod} E_{\theta_i} \bracket{
        \frac{1}{1 + s \rho F_K\paren{\frac{\pi}{2} \paren{\theta_0 - \theta_i}}}
    }.
\end{eqnarray}
\hrulefill
\end{figure*}
\else
\begin{eqnarray}
    \label{eq:reac_lap_step1}
    &&\mathcal{L}_I\paren{s \lvert \bar{C}, x_0 \in LOS_m, N_m = n } \nonumber \\
    &=& E_{g_i, \theta_i} \curly{\exp \bracket{
        -s \underset{x_i \in \Phi_I}{\sum} \rho \abs{g_i}^2 F_K\paren{\frac{\pi}{2} \paren{\theta_0 - \theta_i}}
    }} \nonumber \\
    &=& \underset{x_i \in \Phi_I}{\prod} E_{g_i, \theta_i} \curly{\exp \bracket{
        -s \rho \abs{g_i}^2 F_K\paren{\frac{\pi}{2} \paren{\theta_0 - \theta_i}}
    }} \nonumber \\
    &=& \underset{x_i \in \Phi_I}{\prod} E_{\theta_i} \bracket{
        \frac{1}{1 + s \rho F_K\paren{\frac{\pi}{2} \paren{\theta_0 - \theta_i}}}
    }.
\end{eqnarray}
\fi
Unfortunately, obtaining a closed-form expression for the expectation in (\ref{eq:reac_lap_step1}) is  not mathematically tractable. Therefore, we first obtain a suitable approximation to the Fejer kernel. Due to the Fejer kernel property in (\ref{eq:fejer_property}), as the number of antennas increases most of the energy is concentrated on the main lobe as shown in Fig. \ref{fig:fejer_kernel}. Hence, we choose to approximate it as
\ifCLASSOPTIONtwocolumn
\begin{equation}
    \label{eq:fejer_approx}
    F_K\paren{x} \approx f_K(x) =
    \begin{cases}
        - \frac{K^3 x^2}{4} + K \text{, if } x \in \curly{-\frac{2}{K}, \frac{2}{K}} \\
        0  \text{, otherwise}.
    \end{cases}
\end{equation}
\else
\begin{equation}
    \label{eq:fejer_approx}
    F_K\paren{x} \approx f_K(x) =
    \begin{cases}
        - \frac{K^3 x^2}{4} + K & \text{, if } x \in \curly{-\frac{2}{K}, \frac{2}{K}} \\
        0 & \text{, otherwise}.
    \end{cases}
\end{equation}
\fi
The quadratic approximation in (\ref{eq:fejer_approx}) renders the derivation of the expectation in (\ref{eq:reac_lap_step1}) tractable. Furthermore, it ensures that $f_K(x) = 0$ whenever $x \notin \paren{-\frac{2}{K}, \frac{2}{K}}$, i.e., the contributions of the signals arriving from directions outside of the main lobe to the interference is zero, and that $F_K\paren{0} = f_K(0) = K$. Hence,
\ifCLASSOPTIONtwocolumn
\begin{eqnarray}
    \label{eq:reac_lap_step2}
    \underset{x_i \in \Phi_I \cap \theta_i \in \paren{-\frac{2}{K}, \frac{2}{K}}}{\prod} E_{\theta_i} \bracket{
        \frac{1}{1 + s \rho F_K\paren{\frac{\pi}{2} \paren{\theta_0 - \theta_i}}}
    } \nonumber \\
    \overset{(a)}{=} \underset{x_i \in \Phi_I \cap \theta_i \in \paren{-\frac{2}{K}, \frac{2}{K}}}{\prod}
        \frac{\tanh^{-1}\paren{\sqrt{\frac{\gamma}{1 + \gamma}}}}{\sqrt{\gamma \paren{1 + \gamma}}}
    \nonumber \\
    \overset{(b)}{=} \underset{n^\prime = 0}{\overset{n}{\sum}} {n \choose n^\prime} \paren{\frac{2}{K}}^{n^\prime} \paren{1 - \frac{2}{ K}}^{n - n^\prime} \nonumber \\ \times \bracket{
        \frac{\tanh^{-1}\paren{\sqrt{\frac{\gamma}{1 + \gamma}}}}{\sqrt{\gamma \paren{1 + \gamma}}}
    }^{n^\prime}
\end{eqnarray}
\else
\begin{eqnarray}
    \label{eq:reac_lap_step2}
    && \underset{x_i \in \Phi_I \cap \theta_i \in \paren{-\frac{2}{K}, \frac{2}{K}}}{\prod} E_{\theta_i} \bracket{
        \frac{1}{1 + s \rho F_K\paren{\frac{\pi}{2} \paren{\theta_0 - \theta_i}}}
    } \nonumber \\
    &\overset{(a)}{=}& \underset{x_i \in \Phi_I \cap \theta_i \in \paren{-\frac{2}{K}, \frac{2}{K}}}{\prod}
        \frac{\tanh^{-1}\paren{\sqrt{\frac{\gamma}{1 + \gamma}}}}{\sqrt{\gamma \paren{1 + \gamma}}}
    \nonumber \\
    &\overset{(b)}{=}& \underset{n^\prime = 0}{\overset{n}{\sum}} {n \choose n^\prime} \paren{\frac{2}{K}}^{n^\prime} \paren{1 - \frac{2}{ K}}^{n - n^\prime} \bracket{
        \frac{\tanh^{-1}\paren{\sqrt{\frac{\gamma}{1 + \gamma}}}}{\sqrt{\gamma \paren{1 + \gamma}}}
    }^{n^\prime},
\end{eqnarray}
\fi
where $(a)$ is obtained from $\int \frac{1}{1 - x^2} dx = \tanh^{-1}(x)$. While step $(b)$ comes from the fact that the interferers' angles are uniformly distributed, given that there are $n$ interferers in the cell, the number of interferers within the typical user's main lobe direction follows a binomial distribution with $n^\prime \sim Binomial\paren{\frac{2}{K}}$. Thus, the conditional success probability can be obtained by summing the marginal distribution weighted by $n^\prime$'s \ac{pmf}. This completes the proof.

\section{Proof of Lemma \ref{lemma:ps_krep}}
\label{app:ps_krep}
In order for a $K$-repetition \ac{harq} transmission attempt to be successful at least one of the repetitions must be successfully decoded. Therefore, the probability that the $m$-th retransmission attempt is successful, conditioned on no preamble collisions, a \ac{los} path and $n$ interferers, can be obtained as the complement probability that all repetitions fail
\ifCLASSOPTIONtwocolumn
, as derived in (\ref{eq:krep_step1}), shown at the top of the next page,
\begin{figure*}[ht!]
\begin{eqnarray}
    \label{eq:krep_step1}
    \mathbb{P}\paren{\underset{l = 1}{\overset{K_{rep}}{\bigcup}}\mathrm{SINR}_{m,l} \geq \gamma \left \lvert TX_0, N_m = n \right.}
    &=& 1 - \mathbb{P} \paren{\underset{l = 1}{\overset{K_{rep}}{\bigcap}}\mathrm{SINR}_{m,l} < \gamma \left \lvert TX_0, N_m = n \right.} \nonumber \\
    &\overset{(a)}{=}& 1 - \underset{l = 1}{\overset{K_{rep}}{\prod}} \bracket{
        1 -  \mathbb{P} \paren{\mathrm{SINR}_{m,l} \geq \gamma \left \lvert TX_0, N_m = n \right.}
    } \nonumber \\
     &\overset{(b)}{=}& 1 - \bracket{1 -  \mathbb{P} \paren{\mathrm{SINR}_{m} \geq \gamma \left \lvert TX_0, N_m = n \right.}}^{K_{rep}} \nonumber \\
     &\overset{(c)}{=}& \underset{l=1}{\overset{K_{rep}}{\sum}} {K_{rep} \choose l} \paren{-1}^{l+1} \nonumber \\ 
     &&\mathbb{P} \paren{\mathrm{SINR}_{m} \geq \gamma \left \lvert TX_0, N_m = n \right.}^{l}
\end{eqnarray}
\hrulefill
\end{figure*}
\else
\begin{eqnarray}
    \label{eq:krep_step1}
    &&\mathbb{P}\paren{\underset{l = 1}{\overset{K_{rep}}{\bigcup}}\mathrm{SINR}_{m,l} \geq \gamma \left \lvert \bar{C}, x_0 \in LOS_m, N_m = n \right.} \quad \nonumber \\
    &=& 1 - \mathbb{P} \paren{\underset{l = 1}{\overset{K_{rep}}{\bigcap}}\mathrm{SINR}_{m,l} < \gamma \left \lvert \bar{C}, x_0 \in LOS_m, N_m = n \right.} \nonumber \\
    &\overset{(a)}{=}& 1 - \underset{l = 1}{\overset{K_{rep}}{\prod}} \bracket{
        1 -  \mathbb{P} \paren{\mathrm{SINR}_{m,l} \geq \gamma \left \lvert \bar{C}, x_0 \in LOS_m, N_m = n \right.}
    } \nonumber \\
     &\overset{(b)}{=}& 1 - \bracket{1 -  \mathbb{P} \paren{\mathrm{SINR}_{m} \geq \gamma \left \lvert \bar{C}, x_0 \in LOS_m, N_m = n \right.}}^{K_{rep}} \nonumber \\
     &\overset{(c)}{=}& \underset{l=1}{\overset{K_{rep}}{\sum}} {K_{rep} \choose l} \paren{-1}^{l+1} \mathbb{P} \paren{\mathrm{SINR}_{m} \geq \gamma \left \lvert \bar{C}, x_0 \in LOS_m, N_m = n \right.}^{l},
\end{eqnarray}
\fi
where step $(a)$ follows from the fact that the set of interferers is different from one repetition to the next, as a new subcarrier is randomly selected for every repetition by each \ac{ue}, making the \ac{sinr}s on distinct repetitions mutually independent. Also, as the \ac{sinr} of every repetition is affected by an interferer process having the same intensity, the probability of success of each repetition is equal, which justifies step $(b)$. Finally, step $(c)$ is obtained from the binomial expansion of the power term. If we work from (\ref{eq:krep_step1}) and follow the same steps derived in Appendix \ref{app:ps_reac}, we obtain (\ref{eq:ps_krep}), which completes the proof.

\ifCLASSOPTIONcaptionsoff
  \newpage
\fi

\bibliographystyle{IEEEtran}
\bibliography{IEEEabrv.bib,references.bib}{}

\begin{thebibliography}{10}
\providecommand{\url}[1]{#1}
\csname url@samestyle\endcsname
\providecommand{\newblock}{\relax}
\providecommand{\bibinfo}[2]{#2}
\providecommand{\BIBentrySTDinterwordspacing}{\spaceskip=0pt\relax}
\providecommand{\BIBentryALTinterwordstretchfactor}{4}
\providecommand{\BIBentryALTinterwordspacing}{\spaceskip=\fontdimen2\font plus
\BIBentryALTinterwordstretchfactor\fontdimen3\font minus
  \fontdimen4\font\relax}
\providecommand{\BIBforeignlanguage}[2]{{%
\expandafter\ifx\csname l@#1\endcsname\relax
\typeout{** WARNING: IEEEtran.bst: No hyphenation pattern has been}%
\typeout{** loaded for the language `#1'. Using the pattern for}%
\typeout{** the default language instead.}%
\else
\language=\csname l@#1\endcsname
\fi
#2}}
\providecommand{\BIBdecl}{\relax}
\BIBdecl

\bibitem{series2015imt}
{ITU}, ``{IMT} vision--framework and overall objectives of the future
  development of {IMT} for 2020 and beyond,'' {International Telecommunication
  Union}, Recommendation ITU, 2015.

\bibitem{WirelessFuture2021}
\BIBentryALTinterwordspacing
E.~Bjornson, E.~Larsson, and A.~Lozano, ``16. {6G and the Physical Layer (with
  Angel Lozano)},'' Dec 2021. [Online]. Available:
  \url{https://open.spotify.com/show/0RDeni3rDUx5HnZWJGACqO}
\BIBentrySTDinterwordspacing

\bibitem{3gpp.38.913}
{3GPP}, ``{Study on scenarios and requirements for next generation access
  technologies},'' {3rd Generation Partnership Project (3GPP)}, Technical
  Report (TR) 36.913, Jul 2020, version 16.0.0.

\bibitem{Vilgelm2018}
M.~Vilgelm, S.~Schiessl, H.~Al-Zubaidy, W.~Kellerer, and J.~Gross, ``On the
  reliability of {LTE} random access: Performance bounds for machine-to-machine
  burst resolution time,'' in \emph{IEEE International Conference on
  Communications (ICC)}, May 2018, pp. 1--7.

\bibitem{3gpp.38.912}
{3GPP}, ``{Study on New Radio (NR) access technology},'' {3rd Generation
  Partnership Project (3GPP)}, Technical Report (TR) 38.912, Sept 2018, version
  15.0.0.

\bibitem{Zaidi2018}
A.~A. Zaidi, R.~Baldemair, V.~Moles-Cases, N.~He, K.~Werner, and A.~Cedergren,
  ``{OFDM} numerology design for {5G} new radio to support {IoT}, {eMBB}, and
  {MBSFN},'' \emph{{IEEE} Commun. Standards Mag.}, vol.~2, no.~2, pp. 78--83,
  June 2018.

\bibitem{3gpp.38.213}
{3GPP}, ``{Physical layer procedures for control},'' {3rd Generation
  Partnership Project (3GPP)}, Technical Specification (TS) 38.213, Apr 2018,
  version 16.5.0.

\bibitem{Karadag2019}
G.~Karadag, R.~Gul, Y.~Sadi, and S.~Coleri~Ergen, ``{QoS}-constrained
  semi-persistent scheduling of machine-type communications in cellular
  networks,'' \emph{{IEEE} Trans. Wireless Commun.}, vol.~18, no.~5, pp.
  2737--2750, Apr 2019.

\bibitem{3gpp.1704222}
{3GPP}, ``{Grant-free transmission for UL URLLC},'' {3rd Generation Partnership
  Project (3GPP)}, {TSG RAN WG1 Meeting 88b} 38.213, Apr 2017.

\bibitem{Kim2021}
J.~Kim, G.~Lee, S.~Kim, T.~Taleb, S.~Choi, and S.~Bahk, ``Two-step random
  access for {5G} system: Latest trends and challenges,'' \emph{{IEEE} Netw.},
  vol.~35, no.~1, pp. 273--279, Feb 2021.

\bibitem{Lu2014}
L.~Lu, G.~Y. Li, A.~L. Swindlehurst, A.~Ashikhmin, and R.~Zhang, ``An overview
  of massive {MIMO}: Benefits and challenges,'' \emph{{IEEE} J. Sel. Topics
  Signal Process.}, vol.~8, no.~5, pp. 742--758, Mar 2014.

\bibitem{Larsson2014}
E.~G. Larsson, O.~Edfors, F.~Tufvesson, and T.~L. Marzetta, ``Massive {MIMO}
  for next generation wireless systems,'' \emph{{IEEE} Commun. Mag.}, vol.~52,
  no.~2, pp. 186--195, Feb 2014.

\bibitem{Bjornson2018}
E.~Björnson, J.~Hoydis, and L.~Sanguinetti, ``Massive {MIMO} has unlimited
  capacity,'' \emph{{IEEE} Trans. Wireless Commun.}, vol.~17, no.~1, pp.
  574--590, Nov 2018.

\bibitem{Marzetta2016}
T.~L. Marzetta, E.~G. Larsson, H.~Yang, and H.~Q. Ngo, \emph{Fundamentals of
  Massive {MIMO}}.\hskip 1em plus 0.5em minus 0.4em\relax Cambridge University
  Press, 2016.

\bibitem{Rappaport2013}
T.~S. Rappaport, S.~Sun, R.~Mayzus, H.~Zhao, Y.~Azar, K.~Wang, G.~N. Wong,
  J.~K. Schulz, M.~Samimi, and F.~Gutierrez, ``Millimeter wave mobile
  communications for 5g cellular: It will work!'' \emph{{IEEE} Access}, vol.~1,
  pp. 335--349, May 2013.

\bibitem{Rangan2014}
S.~Rangan, T.~S. Rappaport, and E.~Erkip, ``Millimeter-wave cellular wireless
  networks: Potentials and challenges,'' \emph{Proc. {IEEE}}, vol. 102, no.~3,
  pp. 366--385, Nov 2014.

\bibitem{Andrews2016}
J.~G. Andrews, T.~Bai, M.~N. Kulkarni, A.~Alkhateeb, A.~K. Gupta, and R.~W.
  Heath, ``Modeling and analyzing millimeter wave cellular systems,''
  \emph{{IEEE} Trans. Commun.}, vol.~65, no.~1, pp. 403--430, Oct 2017.

\bibitem{Sattar2017}
Z.~Sattar, J.~V.~C. Evangelista, G.~Kaddoum, and N.~Batani, ``Analysis of the
  cell association for decoupled wireless access in a two tier network,'' in
  \emph{IEEE 28th Annual International Symposium on Personal, Indoor, and
  Mobile Radio Communications (PIMRC)}, Feb 2017, pp. 1--6.

\bibitem{Sattar2019a}
------, ``Antenna array gain and capacity improvements of ultra-wideband
  millimeter wave systems using a novel analog architecture design,''
  \emph{IEEE Wireless Commun. Lett.}, vol.~9, no.~3, pp. 289--293, Nov 2019.

\bibitem{Sattar2019b}
------, ``Spectral efficiency analysis of the decoupled access for downlink and
  uplink in two-tier network,'' \emph{{IEEE} Trans. Veh. Technol.}, vol.~68,
  no.~5, pp. 4871--4883, Feb 2019.

\bibitem{Gao2019}
Y.~Gao and L.~Dai, ``Random access: Packet-based or connection-based?''
  \emph{{IEEE} Trans. Wireless Commun.}, vol.~18, no.~5, pp. 2664--2678, May
  2019.

\bibitem{Liu2021}
Y.~Liu, Y.~Deng, H.~Zhou, M.~Elkashlan, and A.~Nallanathan, ``A general deep
  reinforcement learning framework for grant-free {NOMA} optimization in
  {mURLLC},'' \emph{arXiv preprint arXiv:2101.00515}, Mar 2021.

\bibitem{Evangelista2021}
J.~V.~C. Evangelista, Z.~Sattar, G.~Kaddoum, B.~Selim, and A.~Sarraf,
  ``Intelligent link adaptation for grant-free access cellular networks: A
  distributed deep reinforcement learning approach,'' Jul 2021.

\bibitem{Lai2021}
K.~Lai, J.~Lei, Y.~Deng, L.~Wen, and G.~Chen, ``Analyzing uplink grant-free
  sparse code multiple access system in massive {IoT} networks,'' \emph{arXiv
  preprint arXiv:2103.10241}, Mar 2021.

\bibitem{Evangelista2019b}
J.~V. Evangelista, Z.~Sattar, and G.~Kaddoum, ``Analysis of contention-based
  scma in mmtc networks,'' in \emph{IEEE Latin-American Conference on
  Communications (LATINCOM)}, Nov 2019, pp. 1--6.

\bibitem{Ding2019}
J.~Ding, D.~Qu, H.~Jiang, and T.~Jiang, ``Success probability of grant-free
  random access with massive {MIMO},'' \emph{{IEEE} Internet Things J.},
  vol.~6, no.~1, pp. 506--516, Sept 2019.

\bibitem{Gharbieh2018}
M.~Gharbieh, H.~ElSawy, H.-C. Yang, A.~Bader, and M.-S. Alouini,
  ``Spatiotemporal model for uplink {IoT} traffic: Scheduling and random access
  paradox,'' \emph{{IEEE} Trans. Wireless Commun.}, vol.~17, no.~12, pp.
  8357--8372, Oct 2018.

\bibitem{Jiang2018}
N.~Jiang, Y.~Deng, X.~Kang, and A.~Nallanathan, ``Random access analysis for
  massive {IoT} networks under a new spatio-temporal model: A stochastic
  geometry approach,'' \emph{{IEEE} Trans. Commun.}, vol.~66, no.~11, pp.
  5788--5803, Jul 2018.

\bibitem{Jacobsen2017}
T.~Jacobsen, R.~Abreu, G.~Berardinelli, K.~Pedersen, P.~Mogensen, I.~Z.
  Kov{\'a}cs, and T.~K. Madsen, ``System level analysis of uplink grant-free
  transmission for {URLLC},'' in \emph{IEEE Globecom Workshops (GC
  Wkshps)}.\hskip 1em plus 0.5em minus 0.4em\relax IEEE, Jan 2017, pp. 1--6.

\bibitem{Liu2020}
Y.~Liu, Y.~Deng, M.~Elkashlan, A.~Nallanathan, and G.~K. Karagiannidis,
  ``Analyzing grant-free access for {URLLC} service,'' \emph{{IEEE} J. Sel.
  Areas Commun.}, vol.~39, no.~3, pp. 741--755, Aug 2020.

\bibitem{Lu2021}
X.~Lu, M.~Salehi, M.~Haenggi, E.~Hossain \emph{et~al.}, ``Stochastic geometry
  analysis of spatial-temporal performance in wireless networks: A tutorial,''
  \emph{arXiv preprint arXiv:2102.00588}, Feb 2021.

\bibitem{Hmamouche2021}
Y.~Hmamouche, M.~Benjillali, S.~Saoudi, H.~Yanikomeroglu, and M.~D. Renzo,
  ``New trends in stochastic geometry for wireless networks: A tutorial and
  survey,'' \emph{Proc. {IEEE}}, vol. 109, no.~7, pp. 1200--1252, Mar 2021.

\bibitem{Jiang2018b}
N.~Jiang, Y.~Deng, A.~Nallanathan, X.~Kang, and T.~Q.~S. Quek, ``Analyzing
  random access collisions in massive {IoT} networks,'' \emph{IEEE Transactions
  on Wireless Communications}, vol.~17, no.~10, pp. 6853--6870, Aug 2018.

\bibitem{Tanbourgi2015}
R.~Tanbourgi, H.~S. Dhillon, and F.~K. Jondral, ``Analysis of joint
  transmit--receive diversity in downlink {MIMO} heterogeneous cellular
  networks,'' \emph{{IEEE} Trans. Wireless Commun.}, vol.~14, no.~12, pp.
  6695--6709, Jul 2015.

\bibitem{Nguyen2013}
T.~M. Nguyen, Y.~Jeong, T.~Q. Quek, W.~P. Tay, and H.~Shin, ``Interference
  alignment in a {P}oisson field of {MIMO} femtocells,'' \emph{{IEEE} Trans.
  Wireless Commun.}, vol.~12, no.~6, pp. 2633--2645, Apr 2013.

\bibitem{Adhikary2015}
A.~Adhikary, H.~S. Dhillon, and G.~Caire, ``Massive-{MIMO} meets {HetNet}:
  Interference coordination through spatial blanking,'' \emph{{IEEE} J. Sel.
  Areas Commun.}, vol.~33, no.~6, pp. 1171--1186, Mar 2015.

\bibitem{Lee2014}
N.~Lee, D.~Morales-Jimenez, A.~Lozano, and R.~W. Heath, ``Spectral efficiency
  of dynamic coordinated beamforming: A stochastic geometry approach,''
  \emph{{IEEE} Trans. Wireless Commun.}, vol.~14, no.~1, pp. 230--241, Jul
  2014.

\bibitem{Afify2016}
L.~H. Afify, H.~ElSawy, T.~Y. Al-Naffouri, and M.-S. Alouini, ``A unified
  stochastic geometry model for {MIMO} cellular networks with
  retransmissions,'' \emph{{IEEE} Trans. Wireless Commun.}, vol.~15, no.~12,
  pp. 8595--8609, Oct 2016.

\bibitem{Ding2017}
Z.~Ding, P.~Fan, and H.~V. Poor, ``Random beamforming in millimeter-wave {NOMA}
  networks,'' \emph{{IEEE} Access}, vol.~5, pp. 7667--7681, Feb 2017.

\bibitem{Haenggi2012}
M.~Haenggi, \emph{Stochastic geometry for wireless networks}.\hskip 1em plus
  0.5em minus 0.4em\relax Cambridge University Press, 2012.

\bibitem{Lu2015}
W.~Lu and M.~Di~Renzo, ``Stochastic geometry modeling of cellular networks:
  Analysis, simulation and experimental validation,'' in \emph{Proceedings of
  the 18th ACM International Conference on Modeling, Analysis and Simulation of
  Wireless and Mobile Systems}, Nov 2015, pp. 179--188.

\bibitem{Elsawy2014}
H.~ElSawy and E.~Hossain, ``On stochastic geometry modeling of cellular uplink
  transmission with truncated channel inversion power control,'' \emph{{IEEE}
  Trans. Wireless Commun.}, vol.~13, no.~8, pp. 4454--4469, Apr 2014.

\bibitem{Thornburg2016}
A.~Thornburg, T.~Bai, and R.~W. Heath, ``Performance analysis of outdoor
  {mmWave} ad hoc networks,'' \emph{{IEEE} Trans. Signal Process.}, vol.~64,
  no.~15, pp. 4065--4079, Apr 2016.

\bibitem{Lee2015}
G.~Lee, Y.~Sung, and J.~Seo, ``Randomly-directional beamforming in
  millimeter-wave multiuser {MISO} downlink,'' \emph{{IEEE} Trans. Wireless
  Commun.}, vol.~15, no.~2, pp. 1086--1100, Sep 2015.

\bibitem{Marsden1993}
J.~E. Marsden, M.~J. Hoffman \emph{et~al.}, \emph{Elementary classical
  analysis}.\hskip 1em plus 0.5em minus 0.4em\relax Macmillan, 1993.

\bibitem{Baccelli2009}
F.~Baccelli and B.~B{\l}aszczyszyn, \emph{Stochastic geometry and wireless
  networks}.\hskip 1em plus 0.5em minus 0.4em\relax Now Publishers Inc, 2009,
  vol.~1.

\bibitem{3gpp.38.101}
{3GPP}, ``{{5G} {NR} User Equipment ({UE}) radio transmission and reception;
  Part 3: Range 1 and Range 2 Interworking operation with other radios},'' {3rd
  Generation Partnership Project (3GPP)}, Technical Specification (TS) 38.101,
  Apr 2021, version 16.7.0.

\bibitem{3gpp.38.211}
------, ``{{5G} {NR} Physical channels and modulation},'' {3rd Generation
  Partnership Project (3GPP)}, Technical Specification (TS) 38.211, Apr 2021,
  version 16.5.0.

\end{thebibliography}









\end{document}